\documentclass[onecolumn,letterpaper,journal]{IEEEtran}
\pdfminorversion=4
\usepackage{upgreek}
\usepackage{mathtools, xparse}
\usepackage{graphicx}
\usepackage{epstopdf}
\usepackage{bm}
\usepackage{scalerel}
\usepackage{amsfonts,amsmath,amssymb}
\usepackage{amsthm}
\usepackage{algorithm2e,algorithmic}
\usepackage{mathtools}
\usepackage{multirow}
\usepackage{verbatim}
\usepackage{url}
\usepackage{comment}
\usepackage{mathtools}
\usepackage{epsf,pgf,graphicx}
\usepackage{color}
\usepackage{tikz,pgf}
\usepackage{diagbox}
\usepackage{makecell}
\usepackage{color}
\usepackage{braket}
\usepackage{mathtools}
\usepackage{float}
\usepackage{tikz}
\usetikzlibrary {positioning}
\usepackage{pgf}
\usepackage{qcircuit}
\usepackage{multirow}
\usepackage{listings}
\usepackage{array}
\usepackage{makecell}

\usepackage[switch]{lineno}

\SetCommentSty{mycommfont}

\SetKwInput{KwInput}{Input}                % Set the Input
\SetKwInput{KwOutput}{Output} 

\usepackage{graphicx,subcaption,cleveref}

\newcommand{\op}{\oplus}

\DeclarePairedDelimiter{\abs}{\lvert}{\rvert}

\newcommand{\pr}{\mathcal{P}}
\newcommand{\xv}{\mathbf{x}}
\newcommand{\yv}{\mathbf{y}}

\newcommand{\Af}{{\mathcal{A}^{(3,2)}}}
\newcommand{\As}{\mathcal{A}^{(3,3)}}

\newcommand{\Ak}[2]{\mathcal{A}^{(#1,#2)}}

\DeclarePairedDelimiter\kett{\lvert}{\rangle}

\newcommand{\var}[1]{\mathbf{#1}}

\newtheorem{fact}{Fact}
\usepackage{graphicx}

\newtheorem{proposition}{Proposition}
\newtheorem{corollary}{Corollary}
\newtheorem{theorem}{Theorem}
\newtheorem{lemma}{Lemma}
\newtheorem{definition}{Definition}
\newtheorem{remark}{Remark}

\begin{document}
\title{Following Forrelation -- 
Quantum Algorithms in Exploring Boolean Functions' Spectra}

\author{\IEEEauthorblockN{Suman Dutta,}
\IEEEauthorblockA{Indian Statistical Institute, Kolkata\\
Email: sumand.iiserb@gmail.com\\}
\and
\IEEEauthorblockN{Subhamoy Maitra,}
\IEEEauthorblockA{Indian Statistical Institute, Kolkata\\
Email: subho@isical.ac.in\\}
\and
\IEEEauthorblockN{Chandra Sekhar Mukherjee,}
\IEEEauthorblockA{Indian Statistical Institute, Kolkata\\
Email: chandrasekhar.mukherjee07@gmail.com}}

\maketitle
%\date{\today}

\begin{abstract}
Here we revisit the quantum algorithms for obtaining Forrelation [Aaronson et al, 2015] values
to evaluate some of the well-known cryptographically significant spectra of Boolean functions, 
namely the Walsh spectrum, the cross-correlation spectrum and the autocorrelation spectrum.
We introduce the existing 2-fold Forrelation formulation with 
bent duality based promise problems as desirable instantiations. 
Next we concentrate on the $3$-fold version through two approaches. First, we judiciously 
set-up some of the functions in $3$-fold Forrelation, so that given an oracle access, 
one can sample from the Walsh Spectrum of $f$. Using this, we obtain improved results than what 
we obtain from the Deutsch-Jozsa algorithm, and in turn it has implications in 
resiliency checking. Furthermore, we use similar idea to obtain 
a technique in estimating the cross-correlation (and thus autocorrelation) value at any point,
improving upon the existing algorithms. Finally, we tweak the quantum algorithm with 
superposition of linear functions to obtain a cross-correlation sampling technique. 
To the best of our knowledge, this is the first cross-correlation sampling algorithm with 
constant query complexity. This also provides a strategy to check if two functions are 
uncorrelated of degree $m$. We further modify this using Dicke states so that the time 
complexity reduces, particularly for constant values of $m$.
\end{abstract}

\begin{IEEEkeywords}
Boolean Functions, Cross-correlation Spectrum, Forrelation, Quantum Algorithm, Walsh Spectrum.
\end{IEEEkeywords}

\section{Introduction}
\label{sec:intro}
Quantum computing is one of the fundamental aspects in Quantum mechanics with many exciting possibilities. However, showing separation 
between classical and quantum paradigm still remains a very hard and interesting problem. In this regard, the black-box model is widely 
used to show the separation in respective domains. Some of the most well known quantum algorithms, such as Shor's factoring~\cite{shor}, 
Grover's search~\cite{grover} and Simon's hidden shift~\cite{simon} are all defined in this black-box paradigm. Naturally, this black-box model 
(also known as the query model) is one of the most studied domains of quantum computer science. In this domain, problems can be studied 
in the exact quantum and bounded error quantum model as well as probabilistic and deterministic classical model.

One of the simplest yet fundamental algorithms in this domain is the Deutsch-Jozsa (DJ) algorithm~\cite{deutsch}. Informally speaking, 
given access to a Boolean function $f$, it creates the Walsh spectrum in the form of a superposition, where the amplitudes of the final 
states are the normalized Walsh spectrum value of the function at the respective points. If $f$ is promised to be either constant or balanced, 
the DJ algorithm deterministically concludes which one it is with a single quantum query, whereas any deterministic classical algorithm 
requires exponentially many queries, which is asymptotically the maximum possible separation between these two models. However obtaining the 
separation between the probabilistic classical model and the bounded error quantum model is not as simple and recently resolved in \cite{bansal}.

In this direction the ``Forrelation" formulation is of great interest. This concept was first introduced in the work by Aaronson 
et al~\cite{forr} and then was used in the seminal paper by Aaronson et al~\cite{aaron}. They showed that the Forrelation problem has 
constant versus exponential (in $n$) query complexity separation in the bounded error quantum and the probabilistic classical model. 
Recently this concept was modified by Tal et al~\cite{tal} to obtain even higher separation between these two. It is quite easy to see that 
the Forrelation problem is efficiently solvable in the quantum paradigm and the main contribution and focus of those papers were towards 
showing that no probabilistic classical algorithm could solve this problem efficiently. In this backdrop, before proceeding further, 
let us start by introducing the $k$-fold Forrelation set-up.

The $k$-fold Forrelation of $k$ Boolean functions 
$f_1, \ldots, f_k$, each defined on $n$ variables can be expressed as 
\begin{align*}
&\Phi_{f_1,\ldots , f_k} =\frac{1}{2^{(k+1)n/2}}\sum\limits_{\var{x_1}, \ldots , \var{x_k} \in \{0,1\}^n}
f_1(\var{x_1})(-1)^{\var{x_1} \cdot \var{x_2}} f_2(\var{x_2})  \ldots (-1)^{\var{x_{k-1}} \cdot \var{x_k}}  
f_k(\var{x_k}).
\end{align*}
Specifically, Aaronson et al~\cite{aaron} showed that the problem of identifying whether 
$\abs{\Phi_{f_1,\ldots ,f_k}}\leq\frac{1}{100}$ or $\Phi_{f_1,\ldots ,f_k}\geq\frac{3}{5}$ has 
$\Omega \left(\frac{2^{\frac{n}{2}}}{n} \right)$ query complexity in the probabilistic classical model as opposed to 
constant query complexity in the bounded error quantum model. To prove those results, they designed the following two quantum algorithms 
assuming oracle access to $f_1$ through $f_k$. 
\begin{enumerate}
\item $\Ak{k}{k}(f_1, \ldots f_k)$: It outputs the state $0^n$ with probability 
$\Phi_{f_1,\ldots , f_k}^2$ upon measuring the $n$ predetermined qubits.

\item $\Ak{k}{\lceil \frac{k}{2} \rceil}(f_1, \ldots f_k)$:
It outputs the state $0$ with probability 
$\frac{1+\Phi_{f_1,\ldots , f_k}}{2}$ upon measuring $1$ predetermined qubit.
\end{enumerate}
In this paper, we analyze the Forrelation problem from a different perspective, in terms of
studying and evaluating different spectra of Boolean functions in the black-box (query) model. 
While $k$-fold Forrelation can be defined for any $k$ functions on $n$ variables, 
we concentrate mostly on the cases $k = 2, 3$ in this study. 

Boolean functions are one the most fundamental combinatorial objects in the domain of computer science, 
with applications in coding theory, cryptology as well as in understanding power and limitations of different 
computational models. The spectra that we consider here have many applications and in particular in the domain of cryptology.
When a Boolean function is used as a primitive in a cryptosystem, its quality is studied by different parameters related 
to the spectra based on certain transformations~\cite{bera,car93,chak,XZ88,cross,ps04}.
In this paper, we sometimes refer to Boolean functions simply as functions. 

\subsection{Organization and Contribution}
The preliminaries and the background materials are presented in Section~\ref{sec:2}. Next a warm-up section
(Section~\ref{sec:3}) is oriented around observing how certain promise problems
can be reduced to different desirable instantiations of the $2$-fold Forrelation,
where the Forrelation of two functions $f$ and $g$ is denoted by $\Phi_{f,g}$.
We start by studying the scenario when $\Phi_{f,g}$ assumes the maximum value $1$ and the scenario when $\Phi_{f,g}$ is close to $0$. Based on them, we tailor several bent-duality based promise problems. 
We also observe that in certain situations the simple DJ algorithm works as efficiently as 
the Forrelation set-up, while in the other (majority of the) cases there is no obvious way of achieving the same using the DJ algorithm. 
In summary, we note that the techniques in solving the Forrelation problem extend the idea of the DJ algorithm much further. 
We like to point out that in~\cite{roet}, certain quantum algorithms related to bent functions have been studied, though 
that is not related to the problems we consider here.

In the next part (Section~\ref{sec:4}) we concentrate on the $3$-fold Forrelation formulation
which we use as a unifying framework for evaluating different spectra of functions, 
and this is where the most interesting results lie. 
Our initial approach in this direction is driven by an observation about
the $3$-fold Forrelation formulation. 
As we shall obtain in Section~\ref{sec:4}, the $3$-fold Forrelation for three 
functions $f_1,f_2$ and $g$ can be written as 
$\Phi_{f_1,g,f_2}=\frac{1}{2^{3n/2}}\sum_{\xv \in \{0,1\}^n}g(\xv)W_{f_1}(\xv)W_{f_2}(\xv)$ where the functions 
are defined from $ \{0,1\}^n$ to $\{1,-1\}$ 
(note that the definition deviates from the usual realization of a Boolean function as $f: \{0,1\}^n \rightarrow \{0,1\}$,
however this is a simple linear transformation of the output bits of the form $0 \rightarrow 1$ and 
$1 \rightarrow -1$). Further, note that the $3$-fold Forrelation algorithms $\Af$ and $\As$ 
need oracle access to all the three functions $f_1, f_2$ and $f_3$. 

Against this backdrop, we set $f_1=f_3=f$ and cleverly design a function $g$
that we set as $f_2$ to estimate the Walsh spectrum value $\left(W_f(\var{u})\right)$ 
of a function at any given set of points $S=\{\var{u}\} \subseteq \{0,1\}^n$. Here we set 
$g$ to be a function such that $g(\xv)=-1$ if and only if $\xv\in S$ and $1$, otherwise.
Then we observe that the sampling probability obtained from using the algorithm $\Af(f,g,f)$ is equivalent to that of the Deutsch-Jozsa algorithm. However, using the algorithm $\As(f,g,f)$, we obtain a constant factor improvement over the DJ algorithm.
The problem of checking whether a given Boolean function $f$ is $m$-resilient or not, can also be improved upon using $\As(f,g,f)$ where $g$ is designed such that $g(\xv)=-1,\,\,\forall \xv:wt\left(\xv\right)\leq m$ and $1$, otherwise. Here $wt(\xv)$ denotes the count of $1$'s in the bit pattern $\xv$.
This provides a constant advantage in the query complexity compared to~\cite{chak} for checking whether a function $f$ is $m$-resilient or not. 
Although the improvement is only constant, but the initial sampling algorithm outperforms Deutsch-Jozsa when compared with respect to 
the same number of queries. This is an interesting result and underlines the ability of the Forrelation formulation to provide further 
insight about such problems.

Next, in Section~\ref{sec:5}, we study the cross-correlation spectrum $C_{f,g}(\xv)$ of two Boolean functions $f$ and $g$ and connect it to the $3$-fold 
Forrelation set-up. First we design an algorithm to estimate the cross-correlation value at any particular
point in the same way as resiliency checking. We use two very interesting results to link the two formulations, cross-correlation and Forrelation. 
Consider the $2^n \times 2^n$ Hadamard matrix 
$\hat{H}_n=
{\begin{pmatrix}
1 & 1\\
1 & -1 
\end{pmatrix}}^{\otimes n}
$. 
One may note that (see~\cite[Theorem 3.1]{cross})
$$[C_{f,g}(000\ldots 0), \ldots, C_{f,g}(111\ldots 1) ] H_n = 
[ 
W_fW_g(000...0), \ldots, W_fW_g(1111 \ldots 1)
].
$$
For the ``non-normalized'' Hadamard matrix we have $\left( \hat{H}_n \right)^2=2^nI_n$, where $I_n$ is the $2^n$-dimensional identity matrix. 
We use this along with the fact that the rows of $H_n$
has a one-to-one correspondence with the $2^n$ linear Boolean functions on $n$ variables. 
Thus we can obtain different values of 
$\sum_{\xv \in \{0,1\}^n}g(\xv)W_{f_1}(\xv)W_{f_2}(\xv)$ where $g$ is a chosen linear function to obtain the cross-correlation value 
at different points. This leads us to an algorithm $A(f,g,\var{u})$ so that for any two functions $f, g$ on $n$ variables and for any point 
$\var{u} \in \{0,1\}^n$, the probability of ``obtaining an all zero output while measuring the $n$ predetermined qubits" is 
$\frac{\left(C_{f,g}(\var{u})\right)^2}{2^{2n}}$. This provides a technique to estimate the cross-correlation spectrum. As a corollary, 
we obtain an autocorrelation spectrum estimation algorithm. We compare the later with the autocorrelation estimation technique in~\cite{bera}
and observe certain improvements. 

Finally we move towards a cross-correlation sampling algorithm. To design this, we tweak the algorithm $\As$ for $3$-fold Forrelation. 
Instead of considering a linear function in place of $f_2$, we implement superposition of all linear functions using an additional $n$ qubits.
This methodology allows us to sample from the cross-correlation spectrum of two functions $f$ and $g$. As the final contribution, we study 
the problem of checking whether two functions $f$ and $g$ are uncorrelated of degree $m$, which is to check
if $C_{f,g}(\var{u})=0$ for all $\var{u}$ of weight less than or equal to $m$. In this direction we use Dicke states to further modify 
the algorithm to improve the query complexity compared to what would have been required if we used
the cross-correlation sampling algorithm. To the best of our knowledge, such a sampling algorithm for cross-correlation spectrum was not 
known before. Moreover this allows to view Forrelation as a unifying framework through which we are able to study these different 
spectra of Boolean functions.

To elaborate further, we first obtain a set-up so that the probability of getting particular outputs
in $\As$ and $\Af$ are $\frac{\left(C_{f,g}(\var{u})\right)^2}{2^{2n}}$ and
$\frac{1}{2}\left(1+\frac{C_{f,g}(\var{u})}{2^n}\right)$ respectively.
We obtain the results by setting a linear function of our choice $\mathbb{L}_{\var{u}}$ as $f_2$ in
the $3$-fold Forrelation formulation, and then setting $f_1=f$ and $f_3=g$.
This gives us two cross-correlation estimation algorithms and the
autocorrelation estimation algorithms as immediate corollaries. We observe that the algorithm due to $\Af$
fairs better than the autocorrelation estimation algorithm due to~\cite{bera}
for comparable accuracy. The improvements are detailed in Section~\ref{compbera}.

We then tweak $\As(f,\mathbb{L}_{\var{u}},g)$
to obtain a cross-correlation sampling algorithm. This is obtained by replacing
the linear function (placeholder of $f_2$) by a superposition of linear functions.
The algorithm results the output $\var{u}||0^n$ with probability
$\frac{\left(C_{f,g}\right)^2}{2^{3n}}$. To the best of our knowledge, this is the first
cross-correlation sampling algorithm with constant query complexity. In fact,
it only makes one query each to the functions $f$ and $g$.

Finally we study the problem of checking if two functions
are uncorrelated of degree $m$, that is $C_{f,g}(\var{u})=0$ for all $\var{u}$
of weight $m$ or less. Note that, this is similar to the resiliency checking
problem that we have studied here, and a similar approach of applying
amplitude amplification on the cross-correlation sampling algorithm guarantees
correct output with constant probability with a query complexity of
$\mathcal{O}(\frac{1}{a})$ where
$a^2=\frac{1}{2^{3n}} \sum_{\xv: wt(\xv) \leq m}
C_{f,g}(\xv)^2$. As the final contribution of this paper,
we modify this algorithm to improve the query complexity, to
$\mathcal{O}(\sum_{i=0}^m\frac{1}{b_i})$ queries to $f$ and $g$ each,
where
$(b_i)^2= \frac{1}{{n \choose i}2^{2n}} \sum_{\xv: wt(\xv) = i} C_{f,g}(\xv)^2$.

In summary, the main contribution of this paper is in studying different fundamental (as well as cryptographically significant)
spectra of Boolean functions using quantum algorithms related to Forrelation. In the process, the results we obtain are new or the superior 
ones than the existing results~\cite{bera,chak}. 
We conclude this paper with future directions in Section~\ref{sec:6}. As a passing remark, all these algorithms are implemented for small values of $n$ 
(generally 4) in the IBMQ simulator and subsequently the results are verified.

\section{Preliminaries and Background}
\label{sec:2}
Let us start with some of the basic definitions, notations and notions that we will be using frequently throughout this paper.
\subsection{Some Properties of Boolean function}
\begin{definition}
\label{def:bf}
Let $\mathbb{F}_2$ be the prime field of characteristic $2$ and 
$\mathbb{F}^n_2\equiv \{\var{x}=\left(x_1,x_2,\ldots, x_n \right):x_i\in\mathbb{F}_2,1\leq i \leq n \}$ 
be the vector space of dimension $n$ over $\mathbb{F}_2$. An $n$-variable Boolean function is a mapping from 
$\mathbb{F}^n_2$ to $\mathbb{F}_2$. In this paper we consider the Boolean functions $f$ such that $f:\{0,1\}^n\rightarrow \{-1,1\}$.

For a given Boolean function $f(\var{x})$, the Walsh transform of $f$ is an integer valued function, 
$W_f:\{0,1\}^n \rightarrow [-2^n,2^n]$ which is defined as 
$$W_f(\bm{\omega})=\sum_{\var{x}\in\{0,1\}^n}{f(\var{x})\cdot(-1)^{\var{x}\cdot \bm{\omega}}},$$
where $\var{x}\cdot \bm{\omega} = x_1\omega_1\oplus x_2\omega_2\oplus \ldots \op x_n \omega_n$ is called the inner product of 
$\var{x}$ and $\bm{\omega}$.
\end{definition}
One can observe that the set of all Walsh spectra corresponding to the $2^{2^n}$ 
functions for any given $n$, forms a very small subset of the complete
integer valued functions $F: \{0,1\}^n \rightarrow [-2^n, 2^n]$.
Furthermore, for any function on $n$ variables, we have the constraint that
$\sum_{ \xv \in \{0,1\}^n} \left(W_f(\xv)\right)^2=2^{2n}$, 
which is the well known Parseval's identity.

In this regard, we define the bent functions (see~\cite{dil74,car93} for more details) for which the absolute value of the Walsh 
transforms are equal for all the inputs.
These functions are the maximally distant functions from the set of all linear functions 
and are widely used as cryptographic primitives (with certain modifications and compositions, as bent functions are not balanced). 
The bent functions also have the unique property of duality, where a bent function effectively mimics the Walsh spectrum of 
another bent function. Let us now formally define bent functions and the concept of 
duality of bent functions.

\begin{definition}
\label{def:bent}
A bent function $f:\{0,1\}^n\rightarrow \{1,-1\}$ is a Boolean function (where $n$ is even), such that the Walsh 
spectrum $W_f(\bm\omega)=\pm 2^{\frac{n}{2}}$ for all $\bm\omega\in\{0,1\}^n$.

Such functions are of maximum nonlinearity for even $n$.
Two bent functions $f,\hat{f}:\{0,1\}^n\rightarrow \{1,-1\}$ are called dual of each other 
if for all $\bm\omega\in\{0,1\}^n$, we have 
$W_f(\bm\omega)=2^{n/2} \hat{f}(\bm\omega)$ and $W_{\hat{f}}(\bm\omega)=2^{n/2} f(\bm\omega)$. 
That is $f(\omega)= sign(W_{\hat{f}}(\omega))$ and vice versa where $sign(a)=1$ if $a>0$ and $-1$ otherwise. 
Furthermore, if $f=\hat{f}$ then $f$ is called its self dual.
\end{definition}
Let us now define a class of bent functions along with some interesting properties that we use in this paper. This class of bent functions comes from the definition of Kerdock codes. To know more about Kerdock codes, one may refer to \cite{kerdock,kerdock1} and the references therein. 
For any two bent functions $f_1$ and $f_2$ arising from the Kerdock code, $f_1 \op f_2$ is also a bent function.

Then we define another property of the Boolean functions, called the resiliency (see~\cite{ps04} for more details and cryptographic implications). 
Let us first define the (Hamming) weight of $\xv \in \{0,1\}^n$. It is simply the number of 
$1$'s in the bit pattern $\xv$. Then the resiliency of a function $f$ is defined as follows~\cite{XZ88}. 
\begin{definition}
An $n$-variable Boolean function $f$ is called $m$-resilient (for $m<n$) if and only if the Walsh transforms, 
$W_f(\bm{\omega})=0$ for $0\leq wt(\bm{\omega})\leq m$.

That is, if there exists an $\bm\omega\in\{0,1\}^n$ such that $wt(\bm{\omega})\leq m$ and $W_f(\bm{\omega})\neq 0$, then $f$ is not $m$-resilient.
\end{definition}
Finally we describe the concepts of cross-correlation spectra of two functions $f$ and $g$ and 
the autocorrelation spectra of any function $f$, which have a wide range of cryptographic importance (see~\cite{cross} for more details).
\begin{definition}
The cross-correlation of two functions $f,g: \{0,1\}^n \rightarrow \{1,-1\}$
at a point $\var{y} \in \{0,1\}^n$ is defined as 
$$C_{f,g}(\var{y})=\sum\limits_{\xv \in \{0,1\}^n} f(\xv)g(\xv \op \var{y}).$$
Similarly, the autocorrelation of a function $f$ at a point
$\var{y} \in \{0,1\}^n$ is defined as 
$C_{f}(\var{y})=\sum\limits_{\xv \in \{0,1\}^n} f(\xv)f(\xv \op \var{y})$.
\end{definition}
We also refer to the following terminology.
\begin{definition}
\label{def:uncor}
Two functions are called to be uncorrelated of degree $k$ if $C_{f,g}(\var{y})=0$ 
for all $\var{y}: 0 \leq wt(\var{y}) \leq k$. 
\end{definition}
It is desirable that the component functions of a cryptographic system 
are pairwise as much uncorrelated as possible. 

\subsection{Quantum algorithms of interest}
We refer to~\cite{nc} for basics of quantum information and computing. In this paradigm, let us now describe the working of 
an algorithm in the black-box (query) model.\\

\noindent{\bf Black-box model}
In this model oracle access to $U_f$ is given for a function $f$ on $n$ variables whose 
internal structure is unknown. We first define the $n$ qubit state $\ket{\xv}, \xv \in \{0,1\}^n$
defined as $\ket{\xv}=\otimes_{i=1}^n \ket{x_i}$. These qubits form the query registers.
In this setting, the functioning of $U_f$ on the 
$n+1$ qubit state $\ket{\xv}\ket{a}$ (where $\ket{a}$ is used for storing the output) 
is defined as $U_f\ket{\xv}\ket{a}=\ket{\xv}\kett*{a \oplus \frac{1-f(\xv)}{2}}$
(this is to accommodate the fact that the range of $f$ is $\{1,-1\}$ and not $\{0,1\}$). 
Then if we set $\ket{a}=\ket{-}$, we have $U_f\ket{\xv}\ket{-}= f(\xv)\ket{\xv}\ket{-}$.
This is also called the phase-kickback method. 
Now we describe the Deutsch-Jozsa (DJ) algorithm,
which is one of the most fundamental algorithms in this model. \\

\noindent{\bf Deutsch-Jozsa algorithm}
Let there be a function $f$ on $n$ variables.
Then starting from the state $\ket{\psi}=\ket{0}^{\otimes n}\ket{-}$,
the working of the DJ algorithm is $(H^{\otimes n}\otimes I)U_f(H^{\otimes n}\otimes I)\ket{\psi}$
where $H^{\otimes n}$ is the $n$ Hadamard gates, applied to the $n$ query registers.
This results in the state
$\frac{1}{2^n}\sum_{\xv \in \{0,1\}^n} W_f(\xv) \ket{\xv}.$
The promise problems such as whether a function is 
balanced or constant, can be resolved using this algorithm. 
The basic criteria for deterministically determining the promise problems using the
DJ algorithm is observing that for all the problems there is a state $\xv \in \{0,1\}^n$ 
(known to the user based on the promise) such that for the first promise $W_f(\xv)=2^n$
and for the second promise $W_f(\xv)=0$, which allows us to distinguish between the 
two cases with certainty.
Next we study the Forrelation problem.\\

\noindent{\bf The Forrelation problem}
The Forrelation property is one of the central results in the study of separating 
the computational power of the bounded error quantum and classical probabilistic models
in the black-box set up. Forrelation is defined as follows. 

\begin{definition}[\cite{forr}]
The $2$-fold Forrelation of two functions $f$ and $g$ defined on $n$ variables is 
the measure of correlation between the function $f$ and the Walsh transform of $g$.
Formally, the $2$-fold Forrelation is expressed as
$$\Phi_{f,g}=\frac{1}{2^{3n/2}} \displaystyle \sum_{\var{x} \in \{0,1\}^n} f(\var{x})W_g(\var{x})=\frac{1}{2^{3n/2}}\sum\limits_{\var{x_1},\var{x_2} \in \{0,1\}^n}
f_1(\var{x_1})(-1)^{\var{x_1} \cdot \var{x_2}} f_2(\var{x_2}).$$

This can be then extended to any $k>2$, where the $k$-fold Forrelation for $k$ functions 
$f_1, \ldots , f_k$ each defined on $n$ variables can be expressed as 
$$\Phi_{f_1,\ldots , f_k} =\frac{1}{2^{(k+1)n/2}}\sum\limits_{\var{x_1}, \ldots , \var{x_k} \in \{0,1\}^n}
f_1(\var{x_1})(-1)^{\var{x_1} \cdot \var{x_2}} f_2(\var{x_2})  \ldots (-1)^{\var{x_{k-1}} \cdot \var{x_k}}  
f_k(\var{x_k}).$$
\end{definition}

In this regard, we have the following theorem due to \cite{aaron}.
\begin{theorem}[\cite{aaron}]
\label{th:aaron}
The problem of determining $\abs*{\Phi_{f,g}} \leq \frac{1}{100}$ or $\Phi_{f,g} \geq \frac{3}{5}$ (under the promise that it is one of the two cases) can be solved in the bounded error quantum query model with a single $query$ whereas any randomized classical algorithm would require at least $\Omega \left(\frac{2^{n/2}}{n} \right)$ queries in the worst case.
\end{theorem}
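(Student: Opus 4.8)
The statement splits into an easy quantum upper bound and a hard classical lower bound, which I would handle separately.

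\textbf{Quantum side.} The plan is to run the standard interference circuit, which is exactly the routine $\Ak{2}{2}(f,g)$ recalled above. Starting from $\ket{0^n}$, apply $H^{\otimes n}$, the phase oracle for $f$ (i.e.\ $\ket{\xv}\ket{-}\mapsto f(\xv)\ket{\xv}\ket{-}$), then $H^{\otimes n}$, the phase oracle for $g$, and $H^{\otimes n}$ once more. Tracking amplitudes, the state becomes $\frac{1}{2^{3n/2}}\sum_{\var{z}}\big(\sum_{\yv} g(\yv)W_f(\yv)(-1)^{\yv\cdot\var{z}}\big)\ket{\var{z}}$, whose $\ket{0^n}$-amplitude is $\frac{1}{2^{3n/2}}\sum_{\yv} g(\yv)W_f(\yv)=\Phi_{f,g}$. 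Hence a single measurement of the $n$ query registers returns $0^n$ with probability $\Phi_{f,g}^2$, which under the promise is $\ge 9/25$ in the first case and $\le 10^{-4}$ in the second. Accepting iff the outcome is $0^n$ therefore yields a constant multiplicative gap from a single query, and a majority vote over $O(1)$ independent repetitions decides the promise with error below $1/3$. (Using $\Ak{2}{1}(f,g)$ instead, whose acceptance probability is $\tfrac{1}{2}(1+\Phi_{f,g})$, is an equally good one-query alternative.)

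\textbf{Classical side: reduction.} For the lower bound I would invoke Yao's minimax principle and exhibit two hard input distributions. Let $\mathcal{D}_{\mathrm{no}}$ pick the truth tables of $f$ and $g$ independently and uniformly at random; then $\Phi_{f,g}$ has mean $0$ and, by Parseval's identity, variance $2^{-n}$, so $\abs*{\Phi_{f,g}}\le 1/100$ with probability $1-o(1)$. For $\mathcal{D}_{\mathrm{yes}}$, draw a Gaussian vector $v\in\mathbb{R}^{2^n}$, set $f=\mathrm{sign}(v)$ and $g=\mathrm{sign}(2^{-n/2}\hat{H}_n v)$ (the sign pattern of the normalised Walsh transform of $v$); a direct computation of $\mathbb{E}_v[\Phi_{f,g}]$ shows it is bounded below by an absolute constant, and after a suitable rescaling/conditioning one forces $\Phi_{f,g}\ge 3/5$ with probability $1-o(1)$. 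It then suffices to prove that no deterministic algorithm making $t=o(2^{n/2}/n)$ queries distinguishes $\mathcal{D}_{\mathrm{yes}}$ from $\mathcal{D}_{\mathrm{no}}$ with advantage bounded away from $0$.

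\textbf{Classical side: the core estimate.} A $t$-query deterministic algorithm is a depth-$t$ decision tree over the $2^{n+1}$ possible answer bits; its acceptance probability under any distribution depends only on the induced distribution over root-to-leaf transcripts, so the distinguishing advantage is at most the total variation distance between the transcript distributions under $\mathcal{D}_{\mathrm{yes}}$ and $\mathcal{D}_{\mathrm{no}}$. The ingredients are: (i) under $\mathcal{D}_{\mathrm{yes}}$ the marginals of $f$ alone and of $g$ alone are uniform, matching $\mathcal{D}_{\mathrm{no}}$; (ii) the correlation between an $f$-answer at $\xv$ and a $g$-answer at $\yv$ is $\Theta\!\big((-1)^{\xv\cdot\yv}2^{-n/2}\big)$; and, decisively, (iii) all higher joint cumulants of any tuple of answers are also $O(2^{-n/2})$-small, a consequence of the rows of $\hat{H}_n$ being orthogonal and $\pm1$-valued. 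Revealing the answers one query at a time down a branch and bounding the drift at each of the $t$ steps (a hybrid/martingale argument) then yields a total variation bound of the shape $O\!\big(t\cdot\mathrm{poly}(n)\cdot 2^{-n/2}\big)$, which is $o(1)$ exactly when $t=o(2^{n/2}/n)$. By Yao this is the desired $\Omega(2^{n/2}/n)$ randomized lower bound.

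\textbf{Main obstacle.} The genuinely difficult step is (iii) together with the rounding from the real Gaussian model back to honest $\{-1,1\}$-valued functions: a naive argument using only the pairwise bound (ii), summed over the $\binom{t}{2}$ pairs of queries, degrades to $t=\Omega(2^{n/4})$, so one must control the higher-order correlation structure — the moment/cumulant matching between the Forrelated and the independent ensembles — and show that replacing Gaussians by their signs does not destroy it. This is precisely the technical heart of~\cite{aaron}, and is where essentially all of the effort goes.
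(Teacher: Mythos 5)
The paper offers no proof of this theorem at all --- it is imported verbatim from \cite{aaron} --- so the only meaningful comparison is against that source. Your quantum half is correct and essentially complete: the amplitude calculation giving $\Phi_{f,g}$ on $\ket{0^n}$ is right, and the promise gap ($\Phi_{f,g}^2\ge 9/25$ versus $\le 10^{-4}$) does yield a bounded-error decision after $O(1)$ repetitions. One small correction: the circuit you present queries $f$ and $g$ once \emph{each}, i.e.\ two oracle queries, so it does not by itself establish the ``single query'' claim in the statement; the parenthetical variant $\Ak{2}{1}(f,g)$ (the controlled/driving-qubit construction with acceptance probability $\tfrac{1}{2}(1+\Phi_{f,g})$) is the one that actually achieves one query, and it should be the main argument rather than an aside.

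The classical half, however, is a roadmap rather than a proof, and the gap is exactly where you say it is. Steps (i) and (ii) are fine, and your observation that a purely pairwise-correlation argument only reaches $t=\Omega(2^{n/4})$ is correct --- but that observation is precisely why item (iii), the control of all higher-order joint statistics of the queried bits under the forrelated ensemble, cannot be asserted as ``a consequence of the rows of $\hat{H}_n$ being orthogonal''; it requires the full multilinear/Gaussian-interpolation machinery of \cite{aaron}. Likewise, ``after a suitable rescaling/conditioning one forces $\Phi_{f,g}\ge 3/5$ with probability $1-o(1)$'' hides a genuine concentration argument (the mean under the Gaussian sign construction is $2/\pi$, and one must show the rounded, Boolean-valued $\Phi_{f,g}$ concentrates there). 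As submitted, the lower bound is an accurate outline of the known proof with its central estimate left as an assertion, so it does not stand on its own; since the present paper only cites the result, the honest course here is either to do the same or to carry out step (iii) in full.
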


The work in~\cite{aaron} designed two efficient quantum algorithms for calculating the $k$-fold Forrelation
of functions, one using $k$ queries and the other using $\lceil  \frac{k}{2} \rceil$
queries.
Let us now briefly discuss the two algorithms and their final outputs as mentioned in~\cite{aaron}.
In both the algorithms we assume oracle access to the functions $f_1, f_2, \ldots ,f_k$,
consistent with the black-box model.

\begin{enumerate}
\item {\bf The $k$-query algorithm:}
The algorithm starts with the state $\ket{\psi}=\ket{0}^{\otimes n}\ket{-}$ and following the steps $H^{\otimes n} \rightarrow U_{f_1} \rightarrow H^{\otimes n} \rightarrow U_{f_2} \rightarrow H^{\otimes n} \rightarrow  \ldots  \rightarrow U_{f_k} \rightarrow H^{\otimes n}$
produces the final state $\ket{\psi_{end}}$, where $H^{\otimes n}$ denotes $n$-copies of Hadamard gates applied to the first $n$ qubits.
Here it is easy to verify that in the state $\ket{\psi_{end}}$ (ignoring the last qubit
which always remains in the $\ket{-}$ state) the amplitude of 
the state $\ket{0}^{\otimes n}$ is $\Phi_{f_1, \ldots , f_k}$. 
That is, if we measure the $n$ qubits, the output will be all zero with probability
$(\Phi_{f_1, \ldots , f_k})^2$.

\item {\bf The $\lceil \frac{k}{2} \rceil$-query algorithm:}
In this algorithm we consider one more qubit, initialized to $\ket{+}$ state, known as the driving qubit.
Thus the starting state is $\ket{+}\ket{0}^{\otimes n}\ket{-}$. 
Now controlled on the driving qubit being in the state $\ket{0}$
the following series of operations are applied to the next $n+1$ qubits: $H^{\otimes n}\rightarrow U_{f_1}\rightarrow H^{\otimes n} \rightarrow \ldots \rightarrow U_{f_{\lceil \frac{k}{2} \rceil}}\rightarrow H^{\otimes n}$, where $H^{\otimes n}$ is applied to the $n$ qubits starting from $2$ to $n+1$.
Similarly, controlled on the driving qubit being on the state $\ket{1}$,
the following operations are applied to the rest of the $n+1$ qubits in the given order: $H^{\otimes n}\rightarrow  U_{f_k}\rightarrow  \ldots \rightarrow  H^{\otimes n}\rightarrow  U_{f_{\lceil \frac{k}{2} \rceil+1}}.$
Finally, the driving qubit is measured in the Hadamard basis, which is equivalent to applying a Hadamard gate and then measuring the driving qubit in the computational basis. 
The probability of observing the output as $\ket{0}$ is given by $\frac{1 + \Phi_{f_1, \ldots , f_k}}{2}$.
\end{enumerate}

In this paper we concentrate on the $2$-fold and the $3$-fold Forrelation problems. For $k=3$, we denote the $\lceil \frac{k}{2} \rceil$-query and the $k$-query algorithms
by $\Af$ and $\As$ respectively.\\

\noindent{\bf The non-resiliency checking algorithm of~\cite{chak}}
\label{amp-chak}
Let us now briefly describe the work by Chakraborty et al~\cite{chak} that 
probabilistically determines if a function $f$ is $m$-resilient, given oracle access $U_f$.
First the DJ algorithm produces the state 
$\sum_{\xv \in \{0,1\}^n} \frac{W_f(\xv)}{2^n} \ket{\xv}$.
Here observe that if $f$ is indeed $m$-resilient then $W_f(\xv)=0$ for all  $\xv\in\{0,1\}^n$ such that $wt(\xv) \leq m$.
Thus if we measure the state we will never observe a state with weight less than or equal to $m$. 
On the other hand if one obtains an output of weight less than or equal to $ m$ then it can be deterministically 
concluded that the function is not $m$-resilient. However if one does not get any 
$\xv: wt(\xv) \leq m$ output even after some polynomial number of trials, it still 
can not be claimed deterministically that $f$ is indeed $m$-resilient, as the $W_f(\xv): wt(\xv) \leq m$ 
may be non zero for only a very small number of $\xv$ and even then the Walsh Spectrum 
values maybe very low at those points.
In this regard, to obtain advantage over classical sampling, amplitude amplification was applied for all the sates with weight less than or equal to $m$.\\

\noindent{\bf The autocorrelation sampling and estimation algorithms of~\cite{bera}}
The paper~\cite{bera} first designs an algorithm to sample autocorrelation spectrum 
of any function $f$ and then designs another algorithm to estimate the autocorrelation spectrum 
at any point $C_f(\xv)$. We note down the output state of their algorithms 
and a general estimation algorithm that they use, to understand how the results due 
to the Forrelation formulation varies from those algorithms.

In this paper~\cite{bera}, Bera et al studied the higher order derivatives of the Walsh spectrum of a function using Deutsch-Jozsa and the equivalence between first order derivative and the autocorrelation spectrum
obtained as an output state to sample autocorrelation spectrum for any function in the following manner.

\begin{theorem}[\cite{bera}]
\label{bera:auto}
There is an algorithm $A_1$ such that $A_1\ket{0}^{\otimes 2n+1}$
produces the state 

$$\ket{\phi}= \ket{-}\frac{1}{\sqrt{2^n}} 
\sum \limits_{\var{b} \in \{0,1\}^n} \sum\limits_{\yv \in \{0,1\}^n} \widehat{\Delta f_{\var{b}}^{(1)}}(\yv)\ket{\yv}\ket{\var{b}}.
$$ 
where $\widehat{\Delta f_{0^n}^{(1)}}(\yv)=\frac{C_f(\yv)}{2^n}$.

\end{theorem}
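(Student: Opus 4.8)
The plan is to realise $A_1$ explicitly on $2n+1$ qubits --- an $n$-qubit register $A$, an $n$-qubit register $B$, and one ancilla held in the state $\ket{-}$ --- by running two copies of the Deutsch--Jozsa (DJ) subroutine in parallel (one query to $U_f$ apiece, sharing the $\ket{-}$ ancilla), then coupling the two registers with a transversal CNOT layer so that the amplitudes acquire the directional-derivative structure $W_f(\xv)\,W_f(\xv\op\var{b})$ of the Walsh spectrum, and finally applying one more layer of Hadamard gates to pass to the output register. The identity $\widehat{\Delta f_{0^n}^{(1)}}(\yv)=C_f(\yv)/2^n$ will then be exactly the Wiener--Khinchin relation between the squared Walsh spectrum and the autocorrelation.

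Concretely, I would proceed as follows. Start from $\ket{-}\,\ket{0}^{\otimes n}_A\,\ket{0}^{\otimes n}_B$. Apply $(H^{\otimes n}\otimes I)\,U_f\,(H^{\otimes n}\otimes I)$ to register $A$ together with the ancilla (using $U_f\ket{\xv}\ket{-}=f(\xv)\ket{\xv}\ket{-}$, so the ancilla is returned unchanged and free for reuse), and then the same operator to register $B$; by the stated action of the DJ algorithm each register is left in $\frac{1}{2^{n}}\sum_{\var{w}}W_f(\var{w})\ket{\var{w}}$, so the joint state is $\frac{1}{2^{2n}}\sum_{\xv,\var{z}}W_f(\xv)W_f(\var{z})\,\ket{-}\ket{\xv}_A\ket{\var{z}}_B$, at a cost of one query per register. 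Next apply CNOTs controlled on $A$ and targeted on $B$, i.e. $\ket{\xv}_A\ket{\var{z}}_B\mapsto\ket{\xv}_A\ket{\xv\op\var{z}}_B$; re-indexing by $\var{b}=\xv\op\var{z}$ turns the state into $\frac{1}{2^{2n}}\sum_{\xv,\var{b}}W_f(\xv)W_f(\xv\op\var{b})\,\ket{-}\ket{\xv}_A\ket{\var{b}}_B$. Finally apply $H^{\otimes n}$ to register $A$ and rename it the output register: the amplitude of $\ket{-}\ket{\yv}\ket{\var{b}}$ becomes $\frac{1}{\sqrt{2^n}}\cdot\frac{1}{2^{2n}}\sum_{\xv}W_f(\xv)W_f(\xv\op\var{b})(-1)^{\xv\cdot\yv}$, which we read off as $\frac{1}{\sqrt{2^n}}\,\widehat{\Delta f_{\var{b}}^{(1)}}(\yv)$, where $\widehat{\Delta f_{\var{b}}^{(1)}}$ is the (suitably normalised) Walsh transform of the derivative $\xv\mapsto W_f(\xv)W_f(\xv\op\var{b})$ of the Walsh spectrum. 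This is exactly the state $\ket{\phi}$ in the statement.

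It then remains to evaluate the $\var{b}=0^n$ slice. Expanding $W_f(\xv)^2=\sum_{\var{a},\var{a}'}f(\var{a})f(\var{a}')(-1)^{\xv\cdot(\var{a}\op\var{a}')}$, summing against $(-1)^{\xv\cdot\yv}$, and using $\sum_{\xv}(-1)^{\xv\cdot\var{v}}=2^{n}[\var{v}=0^n]$ to collapse the double sum to the diagonal $\var{a}'=\var{a}\op\yv$, one gets $\sum_{\xv}W_f(\xv)^2(-1)^{\xv\cdot\yv}=2^n\sum_{\var{a}}f(\var{a})f(\var{a}\op\yv)=2^{n}C_f(\yv)$, hence $\widehat{\Delta f_{0^n}^{(1)}}(\yv)=\frac{1}{2^{2n}}\cdot 2^n C_f(\yv)=C_f(\yv)/2^{n}$. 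As a consistency check, applying Parseval to each register gives $\sum_{\var{b},\yv}\big(\widehat{\Delta f_{\var{b}}^{(1)}}(\yv)\big)^2=\frac{1}{2^{3n}}\big(\sum_{\xv}W_f(\xv)^2\big)\big(\sum_{\var{z}}W_f(\var{z})^2\big)=\frac{(2^{2n})^2}{2^{3n}}=2^{n}$, so $\frac{1}{\sqrt{2^n}}$ is the right prefactor and $\ket{\phi}$ is a unit vector.

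Since the construction is essentially two DJ calls, a CNOT cascade, and a Hadamard layer, there is no deep obstacle; the one thing that needs care is matching the normalisation/sign convention for $\widehat{\Delta f_{\var{b}}^{(1)}}$ adopted in~\cite{bera} (whether the ``derivative of the Walsh spectrum'' carries an extra factor $2^{-n}$, or a sign depending on $\var{b}$) --- any such choice is absorbed into the final relabelling and a basis-independent phase, and it affects neither the measurement statistics nor the $\var{b}=0^n$ identity, which is the only part of the statement used later. A secondary minor point is to check carefully that a single $\ket{-}$ ancilla really can be reused across the two DJ subroutines, so that the qubit count stays at $2n+1$ and the total number of queries to $U_f$ is exactly two.
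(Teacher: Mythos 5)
This theorem is quoted from~\cite{bera} and the paper gives no proof of it, so there is nothing in-paper to compare your argument against; judged on its own, your construction is correct. You verify the two essential facts: that the circuit (two parallel Deutsch--Jozsa calls sharing the $\ket{-}$ ancilla, a transversal CNOT layer, one more Hadamard layer) is a legitimate $2$-query unitary on $2n+1$ qubits producing a normalized state of the stated shape, and that the $\var{b}=0^n$ amplitudes equal $\frac{1}{\sqrt{2^n}}\cdot\frac{C_f(\yv)}{2^n}$ via $\sum_{\xv}W_f(\xv)^2(-1)^{\xv\cdot\yv}=2^nC_f(\yv)$, which is exactly what the surrounding text of the paper uses (the probability $C_f(\yv)^2/2^{3n}$ of observing $\yv||0^n$). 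Your route is the ``Walsh-domain'' one: you build the product $W_f(\xv)W_f(\xv\op\var{b})$ by convolving two independent Walsh-spectrum states and invoke Wiener--Khinchin, whereas the algorithm of~\cite{bera} works in the ``time domain,'' querying $f$ at $\xv$ and at $\xv\op\var{a}$ to put the derivative $f(\xv)f(\xv\op\var{a})$ directly into the phase before a final Hadamard; both cost two queries and yield identical statistics for the $\var{b}=0^n$ slice. The one caveat, which you correctly flag yourself, is that for general $\var{b}$ your amplitudes equal $\frac{(-1)^{\yv\cdot\var{b}}}{2^n}\sum_{\var{t}}f(\var{t})f(\var{t}\op\yv)(-1)^{\var{t}\cdot\var{b}}$, i.e.\ they can differ from the convention of~\cite{bera} for $\widehat{\Delta f_{\var{b}}^{(1)}}(\yv)$ by the relative signs $(-1)^{\yv\cdot\var{b}}$ (a CZ layer between the registers, not a global phase); since the theorem is only ever used through the $\var{b}=0^n$ marginal and the measurement probabilities, this does not affect anything downstream.
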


Thus the probability of getting the state $\ket{\yv}\ket{0}^{\otimes n}$ is $\frac{C_f(\yv)^2}{2^{3n}}$.
Next the authors design a swap test based algorithm to lay the premise for a better estimation algorithm 
for $C_f(\yv)$.

\begin{lemma}[\cite{bera}]
\label{bera:auto-est}
There is an algorithm $A_2(\yv)$ defined on $3n+2$ qubits 
so that upon acting on $\ket{0}^{\otimes 3n+2}$ the probability of 
obtaining the output $0$ on measuring a predetermined qubit in 
the computational basis is given by $\frac{1}{2}\left(1+ \frac{C_f(u)^2}{2^{2n}}\right)$.
\end{lemma}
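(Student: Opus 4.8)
The plan is to realize $A_2(\yv)$ as a \emph{swap test} whose two inputs are prepared so that their overlap equals the normalized autocorrelation $C_f(\yv)/2^n$. Since the swap test reports the outcome $0$ with probability $\tfrac12\bigl(1+\lvert\langle\cdot|\cdot\rangle\rvert^2\bigr)$, and $C_f(\yv)$ is a real integer, this immediately gives $\tfrac12\!\left(1+\tfrac{C_f(\yv)^2}{2^{2n}}\right)$. So the whole content is to exhibit a short circuit, using two queries to $U_f$, that produces a state whose amplitude (or overlap with a reference) is exactly $C_f(\yv)/2^n$.

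For the first ingredient I would build an $n$-qubit state $\ket{\chi_\yv}$ with $\langle 0^n\,|\,\chi_\yv\rangle = C_f(\yv)/2^n$, using the phase-kickback oracle as in the excerpt. Starting from $\ket{0^n}\ket{-}$: apply $H^{\otimes n}$ to get $\tfrac{1}{\sqrt{2^n}}\sum_{\xv} \ket{\xv}\ket{-}$; apply $U_f$ to get $\tfrac{1}{\sqrt{2^n}}\sum_{\xv} f(\xv)\ket{\xv}\ket{-}$; apply the bit-flips $X^{\yv}=\bigotimes_i X^{y_i}$ — equivalently a CNOT-controlled shift from an auxiliary $n$-qubit register holding $\ket{\yv}$, which is the natural place one of the three length-$n$ registers comes from, mirroring the difference-vector register of $A_1$ in Theorem~\ref{bera:auto} — to reach $\tfrac{1}{\sqrt{2^n}}\sum_{\xv} f(\xv\op\yv)\ket{\xv}\ket{-}$ after reindexing; apply $U_f$ again to get $\tfrac{1}{\sqrt{2^n}}\sum_{\xv} f(\xv\op\yv)f(\xv)\ket{\xv}\ket{-}$; and finally apply $H^{\otimes n}$. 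Expanding the last Hadamard layer, the coefficient of $\ket{0^n}$ is $\tfrac{1}{2^n}\sum_{\xv} f(\xv)f(\xv\op\yv)=C_f(\yv)/2^n$, and the last qubit remains disentangled in $\ket{-}$ the whole time. (An equivalent option is to prepare two $n$-qubit registers in $\tfrac{1}{\sqrt{2^n}}\sum_{\xv} f(\xv)\ket{\xv}$ and $\tfrac{1}{\sqrt{2^n}}\sum_{\xv} f(\xv\op\yv)\ket{\xv}$; their inner product is again $C_f(\yv)/2^n$, and the swap test is then applied to these two registers directly.)

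For the second ingredient I would prepare a fresh $n$-qubit reference register in $\ket{0^n}$, adjoin a control qubit in $\ket{+}$, apply a controlled-swap between the $\ket{\chi_\yv}$ register and the reference, and then measure the control in the Hadamard basis. The standard swap-test identity yields $\Pr[\text{outcome }0]=\tfrac12\bigl(1+\lvert\langle 0^n|\chi_\yv\rangle\rvert^2\bigr)=\tfrac12\bigl(1+C_f(\yv)^2/2^{2n}\bigr)$. The register budget matches the claimed $3n+2$: the Walsh/$\ket{\chi_\yv}$ register ($n$), the $\ket{\yv}$-shift register ($n$), the reference register ($n$), the phase-kickback ancilla ($1$), and the swap-test control ($1$); the algorithm makes two queries to $f$. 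I do not anticipate a genuine obstacle; the only care points are (i) keeping the oracle's sign convention $U_f\ket{\xv}\ket{-}=f(\xv)\ket{\xv}\ket{-}$ straight so that the two $U_f$ calls multiply to $f(\xv)f(\xv\op\yv)$ (a product, not an XOR), and (ii) checking that the $\ket{\yv}$-register and the kickback ancilla factor out cleanly, so the swap test acts on a genuinely pure $n$-qubit state and the overlap is exactly $C_f(\yv)/2^n$ with no spurious averaging.
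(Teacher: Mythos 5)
Your proposal is correct, and it matches the route the paper indicates: this lemma is imported from~\cite{bera} without proof, described in the surrounding text only as ``a swap test based algorithm,'' and your reconstruction --- prepare (up to the disentangled $\ket{-}$ ancilla) states whose overlap is $C_f(\yv)/2^n$ via two phase-kickback queries and the shift $X^{\yv}$, then run a swap test so that $\Pr[0]=\tfrac12\bigl(1+\lvert C_f(\yv)/2^n\rvert^2\bigr)$ --- is exactly that construction, with the register count $3n+2$ accounted for consistently. The two care points you flag (sign convention of $U_f$ and the clean factoring of the ancilla and shift register) are handled correctly, so there is nothing to add.
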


Finally they use a general quantum estimation algorithm that is used together with Lemma~\ref{bera:auto-est}
to obtain efficient sampling.

\begin{lemma}[\cite{brass}]
\label{bera:est}
Let $A$ be a quantum circuit without any measurement and let $p$ denotes the probability of observing its output state in a particular subspace. There is a quantum algorithm that makes a total of $\mathcal{O}(\frac{\pi}{\epsilon}\log \frac{1}{\delta})$ many (controlled) calls to $A$ and returns an estimate
$\tilde{p}$ such that $\pr[\tilde{p}-\epsilon \leq p \leq \tilde{p}+\epsilon ] \geq 1-\delta$
for any accuracy $\epsilon \leq \frac{1}{4}$ and error $\delta < 1$.
\end{lemma}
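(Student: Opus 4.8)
The plan is to establish this as an instance of quantum amplitude estimation in the style of Brassard et al.\ \cite{brass}. First I would extract the algebraic structure hidden in the hypothesis: since $A$ contains no measurement it is a unitary, and ``observing its output state in a particular subspace'' means there is an orthogonal projector $\Pi$ onto that subspace with $p=\bra{0}A^\dagger \Pi A\ket{0}$. Writing $\ket{\psi}=A\ket{0}=\sqrt{p}\,\ket{\psi_1}+\sqrt{1-p}\,\ket{\psi_0}$, with $\ket{\psi_1}$ supported in the good subspace and $\ket{\psi_0}$ in its complement, the vector $\ket{\psi}$ lies in the real two-dimensional span of $\ket{\psi_0},\ket{\psi_1}$, and on this plane the Grover-type operator $Q=-AS_0A^\dagger S_\Pi$ (where $S_0=I-2\ket{0}\bra{0}$ and $S_\Pi=I-2\Pi$) acts as a rotation by $2\theta$, where $\theta\in[0,\pi/2]$ is defined by $\sin^2\theta=p$. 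Each application of $Q$ costs one (controlled) call to $A$ and one to $A^\dagger$, while $S_0$ and $S_\Pi$ are measurement-free reflections that cost nothing in terms of oracle calls.

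Next I would run phase estimation on $Q$ with input $\ket{\psi}$. Decomposing $\ket{\psi}$ into the two eigenvectors of $Q$ with eigenvalues $e^{\pm 2i\theta}$, phase estimation with a register that supports $M$ values returns an index giving an estimate $\tilde\theta$ with $|\tilde\theta-\theta|\le \pi/M$ (up to the harmless $\theta\leftrightarrow\pi-\theta$ ambiguity, which leaves $\sin^2$ unchanged) with probability at least $8/\pi^2$, and with rapidly decaying tails beyond that. Setting $\tilde p=\sin^2\tilde\theta$ and using the Lipschitz bound $|\sin^2 a-\sin^2 b|\le |a-b|$, the choice $M=\Theta(1/\epsilon)$ yields $|\tilde p-p|\le \epsilon$ with a fixed constant probability, using $O(M)=O(1/\epsilon)$ controlled calls to $A$ in total; the restriction $\epsilon\le \tfrac14$ is only there because the estimate is trivially capped for larger $\epsilon$.

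Finally, to boost the constant success probability to $1-\delta$, I would repeat the whole estimation $t=\Theta(\log(1/\delta))$ times independently and output the median of the $t$ values. A Chernoff/Hoeffding bound shows the median fails to be within $\epsilon$ of $p$ only if more than half the runs fail, which has probability at most $\delta$ once $t$ exceeds a fixed constant times $\log(1/\delta)$. The total number of (controlled) calls to $A$ is then $\mathcal{O}\!\left(\tfrac{1}{\epsilon}\log\tfrac{1}{\delta}\right)$, absorbing the $\pi$ into the constant to match the stated bound. The main obstacle is the bookkeeping in the middle step: converting the circular, angle-level guarantee of phase estimation into a genuine two-sided additive guarantee on $p=\sin^2\theta$, while keeping the per-run failure probability a constant strictly below $1/2$ so that the median-amplification argument of the last step actually applies; everything else is the standard amplitude-estimation machinery.
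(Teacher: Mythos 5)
The paper does not prove this lemma; it is imported verbatim (via~\cite{bera}) from Brassard et al.~\cite{brass}, and your sketch is precisely the standard amplitude-estimation argument from that reference: the Grover iterate $Q=-AS_0A^\dagger S_\Pi$ acting as a rotation by $2\theta$ with $\sin^2\theta=p$, phase estimation with $M=\Theta(1/\epsilon)$ giving $|\tilde\theta-\theta|\le\pi/M$ with probability at least $8/\pi^2$, the Lipschitz bound $|\sin^2 a-\sin^2 b|\le|a-b|$, and a median over $\Theta(\log(1/\delta))$ independent runs. The argument is correct, including the two points that need care: the per-run success probability $8/\pi^2$ is a constant strictly above $1/2$, so the median amplification goes through, and the $\theta\leftrightarrow\pi-\theta$ eigenvalue ambiguity is harmless because $\sin^2$ is invariant under it.
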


Combining the Lemmas~\ref{bera:auto-est} and~\ref{bera:est} the authors of~\cite{bera}
obtain the following result.

\begin{theorem}[\cite{bera}]
\label{th:final}
Given a function $f$ there exists an algorithm that makes 
$\mathcal{O}\left( \frac{\pi}{\epsilon} \log \frac{1}{\delta} \right)$ queries
and returns an estimate $\alpha$ such that
$\pr[\alpha-\epsilon \leq \frac{C_f(\yv)^2}{2^{2n}} \leq \alpha+\epsilon ] \geq 1-\delta$.
\end{theorem}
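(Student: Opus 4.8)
The plan is to instantiate the generic amplitude-estimation routine of Lemma~\ref{bera:est} with the circuit $A_2(\yv)$ supplied by Lemma~\ref{bera:auto-est}, and then invert the affine relation between the measured probability and the target quantity. Concretely, for the given point $\yv$, take $A = A_2(\yv)$ acting on $3n+2$ qubits and let $p$ be the probability that the designated output qubit reads $0$. By Lemma~\ref{bera:auto-est}, $p = \frac{1}{2}\left(1 + \frac{C_f(\yv)^2}{2^{2n}}\right)$, hence $\frac{C_f(\yv)^2}{2^{2n}} = 2p - 1$, so an estimate of $p$ immediately yields an estimate of the desired ratio.

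Next I would apply Lemma~\ref{bera:est} to $A = A_2(\yv)$ with target accuracy $\epsilon' = \epsilon/2$ and failure probability $\delta$. This produces, after $\mathcal{O}\!\left(\frac{\pi}{\epsilon'}\log\frac{1}{\delta}\right) = \mathcal{O}\!\left(\frac{\pi}{\epsilon}\log\frac{1}{\delta}\right)$ (controlled) invocations of $A_2(\yv)$, an estimate $\tilde{p}$ with $\pr[\tilde{p} - \epsilon' \leq p \leq \tilde{p} + \epsilon'] \geq 1-\delta$. Setting $\alpha = 2\tilde{p} - 1$ and multiplying the inequality through by $2$ gives $\pr[\alpha - \epsilon \leq \frac{C_f(\yv)^2}{2^{2n}} \leq \alpha + \epsilon] \geq 1-\delta$, which is exactly the claimed guarantee. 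The hypothesis $\epsilon' \leq \tfrac{1}{4}$ of Lemma~\ref{bera:est} is satisfied whenever $\epsilon \leq \tfrac{1}{2}$, which covers the regime of interest.

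It then remains to convert the number of calls to $A_2(\yv)$ into a query count. Here I would observe that $A_2(\yv)$ is assembled from the autocorrelation-estimation construction behind Lemma~\ref{bera:auto-est}, which invokes the oracle $U_f$ only a constant number of times (the underlying swap-test based circuit queries $f$ twice); consequently each controlled call to $A_2(\yv)$ costs $\mathcal{O}(1)$ queries to $f$, and the total query complexity is $\mathcal{O}\!\left(\frac{\pi}{\epsilon}\log\frac{1}{\delta}\right)$, as stated.

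I do not anticipate a genuine obstacle: the argument is a bookkeeping composition of two black-box results already in hand. The only points requiring care are (i) the factor-two rescaling between $p$ and $C_f(\yv)^2/2^{2n}$, together with the matching halving of the accuracy parameter passed to Lemma~\ref{bera:est}, and (ii) confirming that a single invocation of $A_2(\yv)$ uses $\mathcal{O}(1)$ oracle queries so that ``calls to $A$'' and ``queries to $f$'' coincide up to a constant. Both are routine, so the substance of the proof is merely making these substitutions explicit.
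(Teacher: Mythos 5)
Your proposal is correct and matches the paper's own (very brief) justification: the paper simply states that Theorem~\ref{th:final} follows by combining Lemma~\ref{bera:auto-est} with the amplitude-estimation routine of Lemma~\ref{bera:est}, which is exactly the composition you carry out, with the affine rescaling $\alpha = 2\tilde{p}-1$ and the halved accuracy parameter made explicit. The only addition you make is the bookkeeping on the $\mathcal{O}(1)$ oracle cost per invocation of $A_2(\yv)$, which is consistent with the cited construction.
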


Here one should note that the algorithm corresponding to Lemma~\ref{bera:auto-est} is similar to the algorithm
$\Af$ for $3$-fold Forrelation in terms of using a driving qubit controlled on which the different
oracles are applied. However, as we shall observe that we are able to instantiate $\Af$ in such a manner
that gives us a more efficient estimation algorithm when we compare the corresponding accuracy values.
Let us now move towards the study of $2$-fold Forrelation.

\section{Warming up: $2$-fold Forrelation and bent Duality}
\label{sec:3}
Identifying one of the two given Boolean functions in the black box model has been a problem of interest in the last three decades. Consequently this has given rise to some of the most fundamental algorithms in the quantum oracle model, such as the Deutsch-Jozsa (DJ) algorithm. 
The DJ algorithm deterministically identifies a constant function from a balanced function in the quantum paradigm with a single query, whereas any classical deterministic algorithm would need exponential operations in terms of the number of queries. 

However, separating the bounded error quantum model and probabilistic classical model is much more challenging. One of the central results in this direction is the work by~\cite{aaron} which builds upon the Forrelation problem defined by Aaronson~\cite{forr} and studies the correlation of a Boolean function $f_1$ with the Walsh Spectrum of another Boolean function $f_2$. 
The Forrelation of two functions $f_1,f_2:\{0,1\}^n\rightarrow\{1,-1\}$ is formally defined as 
$$\Phi_{f_1,f_2}=\frac{1}{2^{3n/2}} \displaystyle \sum_{\var{x_1}\in \{0,1\}^n}
f_1(\var{x_1})W_{f_2}(\var{x_1})=\frac{1}{2^{3n/2}} \displaystyle \sum_{\var{x_1}, \var{x_2} \in \{0,1\}^n}
f_1(\var{x_1})(-1)^{\var{x_1} \cdot \var{x_2}}f_2(\var{x_2}).$$
Given two Boolean function, $f_1,f_2:\{0,1\}^n\rightarrow\{-1,1\}$ it is easy to see that $\Phi_{f_1,f_2}=\Phi_{f_2,f_1}$ and $-1 \leq \Phi_{f_1,f_2} \leq 1$. 
 
Given oracle access to the functions $f_1,f_2$, the algorithm begins with the state $\ket{0}^{\otimes n}\ket{-}$ and traverses through the following sequence of steps: $H^{\otimes n}\rightarrow U_{f_1}\rightarrow H^{\otimes n}\rightarrow  U_{f_{2}}\rightarrow
H^{\otimes n}$. Finally, after ignoring the last qubit, the amplitude of the all zero state becomes $\Phi_{f_1,f_2}$ and thus upon measurement 
the probability of obtaining the all zero string, $\ket{0}^{\otimes n}$ is given by
$\left( \Phi_{f_1,f_2} \right)^2$ \cite{aaron}.
The quantum structure of the algorithm is depicted in Figure \ref{fig:q2f2}.

\begin{figure}[ht]
\centering
\includegraphics[scale=0.9]{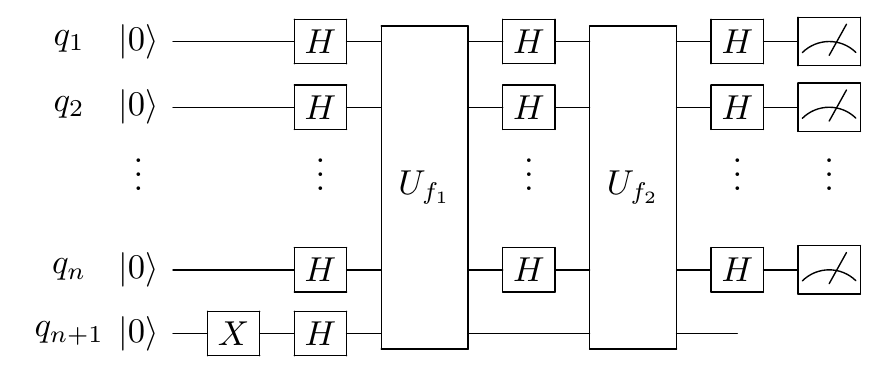}
\vspace{0.1cm}
\caption{Quantum circuit for implementing the $2$-fold Forrelation problem using $2$ queries.}
\label{fig:q2f2}
\end{figure}

We now present certain desirable instantiations of the Forrelation problem (as directed in Theorem~\ref{th:aaron})
by connecting different properties of Boolean functions. 
Moreover, we observe how this algorithm varies from the Deutsch-Jozsa algorithm and becomes similar to it in results depending on the choice of the functions $f_1$ and $f_2$. 

Let us now concentrate on the scenario where the Forrelation of two functions 
$f_1$ and $f_2$ on $n$ variables is maximum, i.e. $1$.
It is easy to see that this scenario happens when $n$ is even and $f_1$ and $f_2$ are 
both bent functions and are dual to each other, which we present in the following lemma.
In this section, unless otherwise mentioned, 
by $\hat{f}$, we denote the dual of $f$, when $f$ is a bent function.
\begin{proposition}
\label{prop:1}
Let $f,g:\{0,1\}^n\rightarrow \{1,-1\}$ be bent. Then $\Phi_{f,g} = 1$ if and only if $f$ and $g$ are dual to each other.
\end{proposition}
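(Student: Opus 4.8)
The plan is to exploit the identity $\Phi_{f,g} = \frac{1}{2^{3n/2}} \sum_{\xv \in \{0,1\}^n} f(\xv) W_g(\xv)$ directly, using the bent hypothesis to control the magnitude of each term. First I would recall from Definition~\ref{def:bent} that for a bent function $g$ we have $W_g(\xv) = \pm 2^{n/2}$ for every $\xv$, so $\frac{1}{2^{3n/2}} f(\xv) W_g(\xv) = \frac{1}{2^n} f(\xv)\,\mathrm{sign}(W_g(\xv))$, a quantity of absolute value exactly $\frac{1}{2^n}$. Summing $2^n$ such terms gives $|\Phi_{f,g}| \le 1$, with equality if and only if all $2^n$ terms have the same sign, i.e. $f(\xv) = \mathrm{sign}(W_g(\xv))$ for all $\xv$ (to get $+1$) or $f(\xv) = -\mathrm{sign}(W_g(\xv))$ for all $\xv$.

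For the ``if'' direction, suppose $f$ and $g$ are dual, so by Definition~\ref{def:bent} $W_g(\bm\omega) = 2^{n/2} \hat g(\bm\omega) = 2^{n/2} f(\bm\omega)$ for all $\bm\omega$ (using $\hat g = f$). Substituting into the Forrelation expression gives $\Phi_{f,g} = \frac{1}{2^{3n/2}} \sum_{\xv} f(\xv) \cdot 2^{n/2} f(\xv) = \frac{1}{2^n} \sum_{\xv} f(\xv)^2 = 1$, since $f(\xv)^2 = 1$ for the $\{1,-1\}$-valued convention used here. For the ``only if'' direction, I would argue contrapositively or directly: if $\Phi_{f,g} = 1$, the equality-in-the-triangle-inequality analysis above forces $f(\xv) = \mathrm{sign}(W_g(\xv))$ for every $\xv$; since $g$ is bent this means $W_g(\xv) = 2^{n/2} f(\xv)$ for all $\xv$, which is exactly one of the two defining conditions for $f$ being the dual of $g$. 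To close the argument I need the companion condition $W_f(\bm\omega) = 2^{n/2} g(\bm\omega)$; this follows from the symmetry $\Phi_{f,g} = \Phi_{g,f}$ (equivalently, apply $\hat{H}_n^2 = 2^n I$ / Fourier inversion), which shows that $f$ is itself bent and that the duality relation is symmetric, so $g = \hat f$ as well.

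The step I expect to require the most care is verifying that $W_g(\xv) = 2^{n/2} f(\xv)$ for all $\xv$ genuinely implies $f$ is bent and that the dual relationship is symmetric — i.e. that we indeed recover the full two-sided Definition~\ref{def:bent} and not just half of it. This is where I would invoke Parseval's identity and the involutive property of the Walsh transform: from $W_g = 2^{n/2} f$ one computes $W_f$ by applying the transform again and using $\hat H_n^2 = 2^n I_n$, obtaining $W_f(\bm\omega) = 2^{n/2} g(\bm\omega)$, which simultaneously shows $f$ is bent (its Walsh values are $\pm 2^{n/2}$ since $g$ is $\{1,-1\}$-valued) and completes the duality. The rest of the proof is the routine triangle-inequality bookkeeping sketched above, plus the observation that the case $f = -\mathrm{sign}(W_g)$ gives $\Phi_{f,g} = -1 \ne 1$, so it does not interfere with the stated equivalence.
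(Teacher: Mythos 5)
Your proof is correct and follows essentially the same route as the paper's: the forward direction substitutes $W_{\hat f}(\xv) = 2^{n/2} f(\xv)$ into $\Phi_{f,g}=\frac{1}{2^{3n/2}}\sum_{\xv}f(\xv)W_g(\xv)$ to get $1$, and the converse uses the fact that for bent $g$ each term has magnitude exactly $2^{-n}$, so the sum equals $1$ only when all $2^n$ signs agree, forcing $W_g = 2^{n/2}f$. Your additional step of applying Fourier inversion ($\hat{H}_n^2 = 2^n I_n$) to recover the companion condition $W_f = 2^{n/2} g$ is a detail the paper's proof leaves implicit, and it is a worthwhile addition since Definition~\ref{def:bent} states duality as a two-sided condition.
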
 
\begin{proof}
If $g=\hat{f}$, from the definition of Forrelation, we obtain
\begin{align*}
\Phi_{f,\hat{f}} = \frac{1}{2^{3n/2}}\displaystyle \sum_{\var{x} \in \{0,1\}^n}f(\var{x})  W_{\hat{f}}(\var{x})= \frac{1}{2^{3n/2}}\displaystyle \sum_{\var{x} \in \{0,1\}^n} 2^{n/2} \left(f(\var{x})\right)^2 = 1.
\end{align*}
Similarly, for two bent functions $f$ and $g$, having $\Phi_{f,g}=1$ 
necessarily implies that $g$ and $W_f(\bm\omega)$ always agree on the signs. Hence, $g$ can be written as $g(\bm\omega)=\frac{1}{2^{n/2}}W_f(\bm\omega)$ implying $f$ and $g$ are dual to each other.
\end{proof}

Furthermore, given two bent functions $f$ and $g$, they are said to be anti-dual if they satisfy $f(\bm\omega)=-\frac{1}{2^{n/2}}W_g(\bm\omega)$ and $g(\bm\omega)=-\frac{1}{2^{n/2}}W_f(\bm\omega)$. Similarly, we can also define the self-anti-dual bent functions.
In this direction, we have the following simple corollaries.
\begin{corollary}
Given $f,g:\{0,1\}^n\rightarrow \{1,-1\}$, two bent functions. Then, $\Phi_{f,g}=-1$ if and only if $f$ and $g$ are anti-dual to each other.
Similarly, given $f$ is a bent function,
$\Phi_{f,f}=1$ if and only if $f$ is self dual and $\Phi_{f,f}=-1$ if and only if $f$ is anti-self-dual.
\end{corollary}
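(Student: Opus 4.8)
The plan is to reduce everything to Proposition~\ref{prop:1} by exploiting the fact that the Forrelation is linear (up to sign) in each of its two arguments, so that $\Phi_{f,-g}=-\Phi_{f,g}$, together with the observation that if $g$ is bent then so is $-g$, with $W_{-g}(\bm\omega)=-W_g(\bm\omega)$ for every $\bm\omega$.

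First I would establish the claim that $\Phi_{f,g}=-1$ if and only if $f$ and $g$ are anti-dual. Since $W_{-g}(\bm\omega)=\sum_{\xv}(-g(\xv))(-1)^{\xv\cdot\bm\omega}=-W_g(\bm\omega)$, the function $-g$ is again bent, and directly from the definition of Forrelation $\Phi_{f,-g}=-\Phi_{f,g}$. Hence $\Phi_{f,g}=-1$ exactly when $\Phi_{f,-g}=1$, which by Proposition~\ref{prop:1} happens exactly when $f$ and $-g$ are dual to each other. Unfolding the duality relations $W_f(\bm\omega)=2^{n/2}\,(-g)(\bm\omega)$ and $W_{-g}(\bm\omega)=2^{n/2}f(\bm\omega)$, and substituting $W_{-g}=-W_g$, yields precisely $g(\bm\omega)=-\frac{1}{2^{n/2}}W_f(\bm\omega)$ and $f(\bm\omega)=-\frac{1}{2^{n/2}}W_g(\bm\omega)$, i.e.\ $f$ and $g$ are anti-dual. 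All steps are reversible, so the equivalence follows.

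For the statements about $\Phi_{f,f}$, I would specialize the above to $g=f$. Proposition~\ref{prop:1} with $g=f$ gives that $\Phi_{f,f}=1$ if and only if $f$ is dual to itself, which is by definition self-dual; and the part just proved with $g=f$ gives that $\Phi_{f,f}=-1$ if and only if $f$ is anti-dual to itself, i.e.\ anti-self-dual.

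I do not anticipate any real obstacle; the only point needing care is the bookkeeping of sign conventions — checking that ``$f$ and $-g$ dual'' unfolds to exactly the two stated anti-dual identities (with consistent signs in both companion equations), and confirming that negating a bent function leaves it bent while negating its Walsh spectrum pointwise. Once these are pinned down, the corollary is immediate from Proposition~\ref{prop:1}.
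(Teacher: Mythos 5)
Your argument is correct. The paper offers no proof of this corollary (it is stated as an immediate consequence of Proposition~\ref{prop:1}), and your reduction --- observing that $-g$ is again bent with $W_{-g}=-W_g$, hence $\Phi_{f,-g}=-\Phi_{f,g}$, so that ``$f$ and $-g$ dual'' unfolds exactly to the two anti-duality identities --- is a clean and complete way to derive all three claims from Proposition~\ref{prop:1}; the sign bookkeeping you flag does check out against the paper's definitions of dual and anti-dual.
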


Let us now look into the other extreme where the Forrelation of two bent functions has very low
absolute value. Together with Proposition~\ref{prop:1} this gives various instantiations of the desired scenario. 
The best scenario is obviously the scenario where $\Phi_{f,g}=0$. 
This can happen in the following scenario.

\begin{proposition}
\label{prop:2}
Let $f$ and $g$ be two bent functions such that $f \op g$ is a balanced function. 
Then the Forrelation $\Phi_{f,\hat{g}}=\Phi_{g,\hat{f}}=0$ where $\hat{f}$ and $\hat{g}$
are duals of $f$ and $g$, respectively.
\end{proposition}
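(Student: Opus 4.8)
The plan is to expand $\Phi_{f,\hat{g}}$ using the Forrelation definition and the duality relation $W_{\hat{g}}(\xv) = 2^{n/2} g(\xv)$, so that the $2^{n/2}$ cancels most of the normalization and we are left with $\Phi_{f,\hat{g}} = \frac{1}{2^n} \sum_{\xv} f(\xv) g(\xv)$. Now $f \oplus g$ being balanced (in the $\{0,1\}$-valued picture) translates, under the $0 \mapsto 1, 1 \mapsto -1$ convention, precisely to $\sum_{\xv} f(\xv) g(\xv) = 0$, since $f(\xv)g(\xv) = +1$ exactly when $f$ and $g$ agree (i.e. $(f \oplus g)(\xv) = 0$) and $-1$ when they disagree. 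Hence $\Phi_{f,\hat{g}} = 0$. The same computation with the roles of $f$ and $g$ swapped, using $W_{\hat{f}}(\xv) = 2^{n/2} f(\xv)$, gives $\Phi_{g,\hat{f}} = \frac{1}{2^n}\sum_{\xv} g(\xv) f(\xv) = 0$, and one can also invoke the symmetry $\Phi_{f_1,f_2} = \Phi_{f_2,f_1}$ noted earlier to get the second equality essentially for free once the first is established.

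The key steps, in order, are: (i) write $\Phi_{f,\hat{g}} = \frac{1}{2^{3n/2}} \sum_{\xv} f(\xv) W_{\hat{g}}(\xv)$ from the $2$-fold Forrelation definition; (ii) substitute $W_{\hat{g}}(\xv) = 2^{n/2} g(\xv)$, valid because $\hat{g}$ is the bent dual of $g$ (Definition~\ref{def:bent}), to obtain $\Phi_{f,\hat{g}} = \frac{1}{2^n} \sum_{\xv} f(\xv) g(\xv)$; (iii) observe that $\sum_{\xv} f(\xv) g(\xv) = C_{f,g}(0^n)$, and relate this sum to the balancedness of $f \oplus g$ — namely that a $\{1,-1\}$-valued function $h$ is balanced iff $\sum_{\xv} h(\xv) = 0$, applied to $h = fg$ which corresponds bitwise to $f \oplus g$; (iv) conclude $\Phi_{f,\hat{g}} = 0$; (v) repeat with $f \leftrightarrow g$ (or cite $\Phi_{f,\hat g}$–type symmetry appropriately) for $\Phi_{g,\hat{f}} = 0$.

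I do not expect any serious obstacle here: the proposition is essentially a one-line unwinding of definitions plus the duality identity. The only point requiring a little care is the bookkeeping around the two output conventions — the paper uses $f:\{0,1\}^n \to \{1,-1\}$, so "f \oplus g is balanced" must be read as "$f \oplus g$, viewed as a $\{0,1\}$-valued function via the inverse transform, is balanced," equivalently "$f$ and $g$ agree on exactly half the inputs," equivalently "$\sum_{\xv} f(\xv) g(\xv) = 0$." Making that translation explicit is the main thing to get right; everything else is a direct substitution.
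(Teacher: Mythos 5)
Your proposal is correct and follows essentially the same route as the paper: substitute the duality identity $W_{\hat{g}}(\xv)=2^{n/2}g(\xv)$ to reduce $\Phi_{f,\hat{g}}$ to $\frac{1}{2^n}\sum_{\xv}f(\xv)g(\xv)$, then use the balancedness of $f\oplus g$ (agreement on exactly half the inputs) to conclude the sum vanishes, and symmetrically for $\Phi_{g,\hat{f}}$. The paper makes the last step explicit by splitting the sum over the agreement set $S$ and its complement, which is exactly the bookkeeping you describe.
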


\begin{proof}
From the definition of Forrelation, we obtain
$$\Phi_{f,\hat{g}} = \frac{1}{2^{3n/2}}\displaystyle \sum_{\xv \in \{0,1\}^n} f(\xv) W_{\hat{g}}(\xv)= \frac{1}{2^{3n/2}}\displaystyle \sum_{\xv \in \{0,1\}^n} f(\xv) g(\xv) 2^{\frac{n}{2}}=\frac{1}{2^n}\displaystyle \sum_{\xv \in \{0,1\}^n} f(\xv) g(\xv).$$

Since we know, $f \op g$ is balanced, for $2^{n-1}$ inputs, $f(\xv)=g(\xv)$ and for the other half 
we have $f(\xv)=-g(\xv)$. Let us denote these sets of inputs as $S$ and $\overline{S}$
respectively. 
Then we can write $\Phi_{f,\hat{g}}$ as follows. 
\begin{align*}
\Phi_{f,\hat{g}}= \frac{1}{2^{n}}\left[\displaystyle \sum_{\xv \in S} f(\xv) g(\xv) 
+
\displaystyle \sum_{\xv \in \overline{S}} f(\xv) g(\xv) \right]= \frac{1}{2^{n}}\left[\displaystyle \sum_{\xv \in S} {f(\xv)}^2 
+
\displaystyle \sum_{\xv \in \overline{S}} -{f(\xv)}^2 \right]
=\frac{1}{2^n}\left[2^{n-1}-2^{n-1} \right]=0.
\end{align*}
The derivation of $\Phi_{g,\hat{f}}$ follows along the same lines.
\end{proof}

Combining the statements of Propositions~\ref{prop:1} and \ref{prop:2}, 
we have the following promise problem which can be deterministically determined
using $2$-fold Forrelation.

\begin{proposition}
\label{pth:1}
Given oracle access to a bent function $f$, the problem of finding whether
another given bent function $g$ is $f$'s dual ($\hat{f}$) or if $g \op \hat{f}$
is balanced can be deterministically decided by making exactly one query to $f$ and $g$ each.
\end{proposition}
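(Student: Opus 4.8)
The plan is to reduce this promise problem directly to Theorem~\ref{th:aaron} by showing that the two promised cases fall on opposite sides of the required threshold. Concretely, I would run the $2$-fold Forrelation algorithm on the pair $(f,g)$ — that is, start from $\ket{0}^{\otimes n}\ket{-}$, apply $H^{\otimes n}\rightarrow U_f\rightarrow H^{\otimes n}\rightarrow U_g\rightarrow H^{\otimes n}$, and measure the first $n$ qubits — so that the probability of observing $\ket{0}^{\otimes n}$ equals $\left(\Phi_{f,g}\right)^2$. This uses exactly one query to each of $f$ and $g$, matching the claimed query count.

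The key steps are then just bookkeeping on the value of $\Phi_{f,g}$ in the two cases. First, if $g=\hat f$, Proposition~\ref{prop:1} gives $\Phi_{f,g}=1$, hence the measurement yields $\ket{0}^{\otimes n}$ with probability $1$. Second, if $g\oplus\hat f$ is balanced, I would apply Proposition~\ref{prop:2} — with the roles chosen so that the ``$f\oplus g$ balanced'' hypothesis there becomes our ``$\hat f\oplus g$ balanced'' hypothesis. Writing $g=\hat h$ where $h=\hat g$ (using that the dual of the dual is the original bent function), the condition $\hat f\oplus g=\hat f\oplus \hat h$ balanced is, up to the symmetry of the balancedness condition under duality, exactly the hypothesis of Proposition~\ref{prop:2}, which yields $\Phi_{f,g}=\Phi_{f,\hat h}=0$. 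Thus in the second case the measurement never produces $\ket{0}^{\otimes n}$.

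Since the two promised cases give measurement-success probabilities $1$ and $0$ respectively, a single run distinguishes them with certainty: observing $\ket{0}^{\otimes n}$ certifies $g=\hat f$, and observing anything else certifies that $g\oplus\hat f$ is balanced. This establishes the claim with one query each to $f$ and $g$; indeed we do not even need the full strength of Theorem~\ref{th:aaron} (which tolerates a gap between $1/100$ and $3/5$) — the promise here is cleanly separated at the extreme values $0$ and $1$.

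The main obstacle is a bit of notational care in invoking Proposition~\ref{prop:2}: that proposition is stated in terms of $\Phi_{f,\hat g}$ with ``$f\oplus g$ balanced'', whereas our promise is phrased as ``$g\oplus\hat f$ balanced''. I would need to verify that these match under the substitution $g\mapsto\hat g$ together with the involutive property $\hat{\hat g}=g$ of bent duality, and that the balancedness of $\hat f\oplus \hat g$ is equivalent to that of $f\oplus g$ (which follows since $\hat f$ and $\hat g$ agree iff the corresponding Walsh signs agree, i.e.\ iff $f$ and $g$ agree, after a global correspondence). Once this identification is pinned down, the rest is immediate.
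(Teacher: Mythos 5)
Your proposal takes the same route as the paper: run the $2$-query $2$-fold Forrelation circuit on $(f,g)$, observe that the promise forces $\Phi_{f,g}\in\{1,0\}$ by Propositions~\ref{prop:1} and~\ref{prop:2}, and distinguish the two cases with certainty from whether the all-zero string appears. The one place where your write-up wobbles is the reduction of the second case to Proposition~\ref{prop:2}. The detour through $h=\hat{g}$ forces you to claim that $\hat{f}\oplus\hat{h}$ is balanced iff $f\oplus h$ is balanced, and the justification you give for it --- that $\hat{f}$ and $\hat{g}$ agree at a point iff $f$ and $g$ do --- is false as a pointwise statement: $\hat{f}(\var{x})=\hat{g}(\var{x})$ says the Walsh coefficients $W_f(\var{x})$ and $W_g(\var{x})$ share a sign, which has nothing to do with $f(\var{x})=g(\var{x})$. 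The equivalence you need is only an aggregate one, $\sum_{\var{x}}\hat{f}(\var{x})\hat{h}(\var{x})=2^{-n}\sum_{\var{x}}W_f(\var{x})W_h(\var{x})=\sum_{\var{x}}f(\var{x})h(\var{x})$ by Parseval/Plancherel, so the claim is true but not for the reason you state. The detour is also unnecessary: apply Proposition~\ref{prop:2} directly to the bent pair $(g,\hat{f})$, whose XOR is balanced by hypothesis, to get $\Phi_{g,\widehat{\hat{f}}}=\Phi_{g,f}=\Phi_{f,g}=0$ (using that duality is involutive and $\Phi$ is symmetric); or simply compute $\Phi_{f,g}=2^{-3n/2}\sum_{\var{x}}g(\var{x})W_f(\var{x})=2^{-n}\sum_{\var{x}}g(\var{x})\hat{f}(\var{x})=0$. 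With either fix the argument is complete and matches the paper's.
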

\begin{proof}
The $2$-query Forrelation algorithm is run with $f_1=f$ and $f_2=g$. 
If $g=\hat{f}$ then $\Phi_{f,g}=1$, resulting the all zero state upon measurement, with certainty.
On the other hand, if $g \op \hat{f}$ is balanced then 
$\Phi_{f,g}=0$ implying that the all zero output is never appears. This allows us 
to deterministically identify the relation between $f$ and $g$.
\end{proof}

Next we look into another scenario where $\Phi_{f,g} \neq 0$ but has very low value. Then the promise problem cannot be solved deterministically, but we can resolve the problem with good probability using only a constant number of queries. In this regard, we consider the 
Kerdock codes and let us denote this set as ${\cal K}$.

Given two bent functions $f_1, f_2 \in {\cal K}$, we know that $f_1 \op f_2$ is bent. Moreover, we know that in the output of a bent 
function, the number of ones and zeros differ by $2^{\frac{n}{2}}$. Thus for $f_1$ and $f_2$ we have 
$\sum_{\xv \in \{0,1\}^n}f_1(\xv)f_2(\xv)=\pm 2^{\frac{n}{2}}.$
Based on this simple observation we construct a promise problem, which is another
desirable instantiation of the $2$-fold Forrelation problem, given in \cite{aaron}.

\begin{proposition}
\label{pth:2}
Given two $n$-variable bent functions $f, g$ such that $f, g, \hat{f} \in {\cal K}$,
the problem of determining whether $g = \hat{f}$ or not can be efficiently solved by making a query each to $f$ and 
$g$ using the $2$-query $2$-fold Forrelation algorithm proposed by \cite{aaron}.
\end{proposition}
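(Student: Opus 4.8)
\noindent
The plan is to recognize this promise as one of the two ``good'' cases of Theorem~\ref{th:aaron} --- namely $\Phi_{f,g}\geq\frac35$ versus $\abs*{\Phi_{f,g}}\leq\frac{1}{100}$ --- by evaluating $\Phi_{f,g}$ in the two scenarios $g=\hat f$ and $g\neq\hat f$. The engine is the identity, valid for bent $f$, obtained from $\Phi_{f,g}=\Phi_{g,f}$ together with $W_f(\bm\omega)=2^{n/2}\hat f(\bm\omega)$:
\[
\Phi_{f,g}=\frac{1}{2^{3n/2}}\sum_{\xv\in\{0,1\}^n}g(\xv)\,W_f(\xv)=\frac{1}{2^n}\sum_{\xv\in\{0,1\}^n}g(\xv)\hat f(\xv).
\]
The point is that this form involves only $g$ and $\hat f$, both of which lie in ${\cal K}$ by hypothesis, so the Kerdock correlation fact recalled just above applies to the sum.

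Next I would dispose of the two cases. If $g=\hat f$, the sum collapses to $\sum_{\xv}\bigl(\hat f(\xv)\bigr)^2=2^n$, so $\Phi_{f,g}=1\geq\frac35$. If $g\neq\hat f$, then $g$ and $\hat f$ are two \emph{distinct} elements of ${\cal K}$; hence $g\op\hat f$ is bent, in particular nonconstant, so its output is unbalanced by exactly $2^{n/2}$, which is precisely the statement $\sum_{\xv}g(\xv)\hat f(\xv)=\pm2^{n/2}$. This gives $\abs*{\Phi_{f,g}}=2^{-n/2}$, which is $\leq\frac{1}{100}$ once $n\geq 14$ (and $n$ is even for bent functions to exist). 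Thus the two scenarios land in the two branches of the promise of Theorem~\ref{th:aaron}.

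To finish, I would invoke Theorem~\ref{th:aaron}: running the $2$-query $2$-fold Forrelation algorithm of~\cite{aaron} with $f_1=f$ and $f_2=g$ uses one query to $f$ and one to $g$, and measuring the $n$ designated qubits yields the all-zero string with probability $\Phi_{f,g}^2$, i.e.\ with certainty when $g=\hat f$ and with probability only $2^{-n}$ when $g\neq\hat f$. Outputting ``$g=\hat f$'' exactly when the all-zero outcome is seen is therefore correct with probability at least $1-2^{-n}$, while the classical side of Theorem~\ref{th:aaron} rules out any randomized classical algorithm matching this with $o(2^{n/2}/n)$ queries. The only delicate point is the sub-case where $g\neq\hat f$ but $g\op\hat f$ is constant (i.e.\ $g$ is the complement of $\hat f$), which would invalidate the $\pm2^{n/2}$ estimate; this is excluded precisely because pairwise sums of distinct members of the Kerdock set are bent, hence nonconstant. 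Everything else is a direct substitution, so I expect no real obstacle beyond this bookkeeping.
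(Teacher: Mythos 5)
Your proposal is correct and follows essentially the same route as the paper: rewrite $\Phi_{f,g}=\frac{1}{2^n}\sum_{\xv}g(\xv)\hat f(\xv)$ using $W_f=2^{n/2}\hat f$, get $\Phi_{f,g}=1$ when $g=\hat f$ and $|\Phi_{f,g}|=2^{-n/2}\leq\frac{1}{100}$ otherwise via the Kerdock property that distinct members XOR to a bent function, then invoke Theorem~\ref{th:aaron}. Your explicit exclusion of the subcase $g\op\hat f$ constant is a small but worthwhile piece of bookkeeping that the paper leaves implicit.
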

\begin{proof}
From Proposition~\ref{prop:1}, if $g$ is dual of $f$, then $\Phi_{f,g}=1$. 
If $g$ is another bent function in ${\cal K}$, then we have 
\begin{align*}
\Phi_{f,g}
&=\frac{1}{2^{\frac{3n}{2}}}
 \displaystyle \sum_{\xv \in \{0,1\}^n} g(\xv)W_f(\xv)=\frac{1}{2^{\frac{3n}{2}}}
\displaystyle \sum_{\xv \in \{0,1\}^n}
g(\xv)\hat{f}(\xv)2^{\frac{n}{2}}=\displaystyle \frac{1}{2^n}\sum_{\xv \in \{0,1\}^n} g(\xv)\hat{f}(\xv) =\pm \frac{1}{2^n}2^{\frac{n}{2}}=\pm \frac{1}{2^{\frac{n}{2}}}.
\end{align*}
For $n>14$ we have $\left|\Phi_{f,g}\right|=\left|\frac{1}{2^{\frac{n}{2}}}\right|<\frac{1}{100}$ and in all such cases
we have the said desirable instantiation following Theorem~\ref{th:aaron}.
\end{proof}

\section{Algorithms related to Walsh Spectrum via $3$-Fold Forrelation}
\label{sec:4}
In this section and in Section~\ref{sec:5} we look into the $3$-fold 
Forrelation set-up and use it as a unifying framework for studying different 
Boolean function spectra, namely the Walsh Spectrum, the cross-correlation spectrum and the 
autocorrelation spectrum. In this section we first study two algorithms 
for the $3$-fold Forrelation, $\Af$ and $\As$. Then we use this formulation for building
Deutsch-Jozsa like algorithms for Walsh spectrum sampling and estimating
if a function is $m$-resilient. We first use the $2$-query algorithm for $3$-fold 
Forrelation to obtain results equivalent to the Deutsch-Jozsa algorithm and then we
use the $3$-query algorithm for the same problem (both the algorithms are due to~\cite{aaron}) 
to outperform the Deutsch-Jozsa algorithm in terms of Walsh spectrum sampling.
The main idea behind the results in this section is intelligently choosing one of the three
functions of the $3$-fold Forrelation formulation ($f_2$ in our case) so that putting the
function $f$ as $f_1$ and $f_3$ allows us to check for Walsh spectrum related properties 
using $\Af$ and $\As$. 
The idea is also carried over for the cross-correlation estimation algorithm in Section~\ref{sec:5}, although for designing a cross-correlation sampling algorithm we need further modifications.

\subsection{A suitable representation of $3$-fold Forrelation and the corresponding algorithms}
\label{sub:representation}
Given three functions $f_1,f_2,f_3:\{0,1\}^n \rightarrow \{1,-1\}$, the $3$-fold Forrelation is defined as 
$$\Phi_{f_1,f_2, f_3}= \frac{1}{2^{2n}}\sum\limits_{\var{x_1}, \var{x_2} , \var{x_3} \in \{0,1\}^n}
f_1(\var{x_1})(-1)^{\var{x_1} \cdot \var{x_2}} f_2(\var{x_2}) (-1)^{\var{x_2} \cdot \var{x_3}} f_3(\var{x_3}).$$
Interestingly, we can also write it down in the following manner.
\begin{align*}
\Phi_{f_1,f_2, f_3}=& 
\frac{1}{2^{2n}}\sum\limits_{\var{x_1}, \var{x_2} , \var{x_3} \in \{0,1\}^n}
f_1(\var{x_1})(-1)^{\var{x_1} \cdot \var{x_2}} f_2(\var{x_2}) (-1)^{\var{x_2} \cdot \var{x_3}} f_3(\var{x_3})
\\=&
\frac{1}{2^{2n}}\sum\limits_{\var{x_2} \in \{0,1\}^n} f_2(\var{x_2})
\sum\limits_{\var{x_1} \in \{0,1\}^n} f_1(\var{x_1})(-1)^{\var{x_2} \cdot \var{x_1}} 
\sum\limits_{\var{x_3} \in \{0,1\}^n} f_3(\var{x_3})(-1)^{\var{x_2} \cdot \var{x_3}}
\\ =&
\frac{1}{2^{2n}}\sum\limits_{\var{x_2} \in \{0,1\}^n} f_2(\var{x_2}) W_{f_1}(\var{x_2}) W_{f_3}(\var{x_2}).
\end{align*}
Moreover, using $f_1=f_3=f$ and $f_2=g$, this gives us 
$\Phi_{f,g, f}=
\frac{1}{2^{2n}}\sum\limits_{\var{x_2} \in \{0,1\}^n} g(\var{x_2}) {W_{f}(\var{x_2})}^2.$
Thus, the $3$-fold-Forrelation actually allows us to sample different combinations of the squares of the Walsh spectrum values of the function, and
when all three functions are different, different combinations of point-wise product
of Walsh spectrum values of two functions, where the combinations are decided by the function 
$f_2$. In all the algorithms we shall design and describe going ahead, we will modify the functions
$f_1,f_2$ and $f_3$ along with some additional tweaks to obtain the desirable sampling scenario.

Given oracle access to $f_1,f_2,f_3:\{0,1\}^n\rightarrow \{-1,1\}$, defined on $n$ variables, let us now briefly discuss the $3$-query and $2$-query quantum algorithms for $3$-fold Forrelation, due to \cite{aaron}).

\begin{enumerate}
\item \textbf{The $3$-query algorithm.} The $3$-query algorithm begins with an $(n+1)$ qubit state $\ket{0}^{\otimes n}\ket{-}$ and effectively
calls to $U_{f_1}$, $U_{f_2}$ $U_{f_3}$ with $H^{\otimes n}$ implemented at the beginning,
between the oracle calls and at the end. Finally, after ignoring the last qubit, the amplitude corresponding to the all zero state is given by $\Phi_{f_1,f_2,f_3}$.
We represent this algorithm as a unitary operation $\As(f_1,f_2,f_3)$ 
which has access to the oracles of the functions $f_1, f_2$ and $f_3$ 
and finally results an $n$-bit output where the probability of 
all the bits being $0$ is $\left( \Phi_{f_1,f_2,f_3} \right)^2$. The quantum structure of the algorithm $\As(f_1,f_2,f_3)$ is depicted in Figure~\ref{fig:q3f3}.
\begin{figure}[H]
\begin{center}
\includegraphics[scale=0.9]{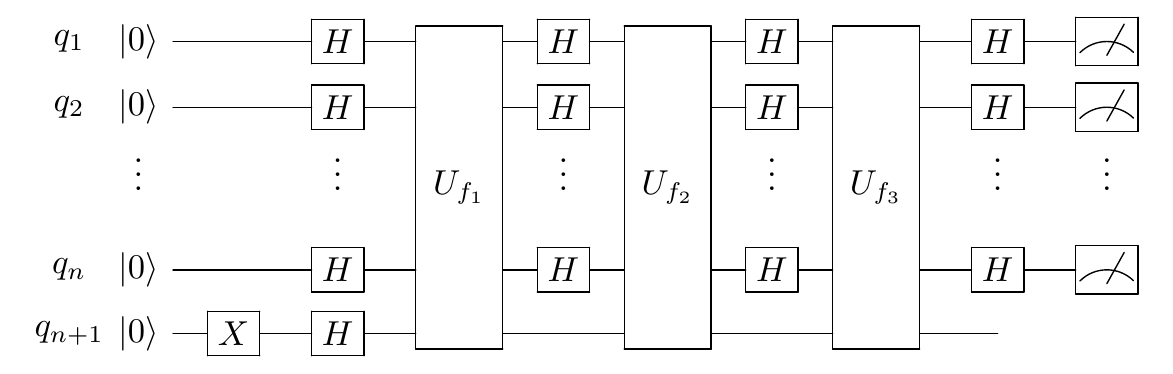}
\caption{Quantum circuit for implementing the $3$-fold Forrelation problem using $3$ queries.}
\label{fig:q3f3}
\end{center}
\end{figure}
\item \textbf{The $2$-query algorithm.} The $2$-query algorithm begins with an $(n+2)$ qubit state $\ket{+}\ket{0}^{\otimes n}\ket{-}$, where the first qubit is the driving qubit. Controlled on the driving qubit being in the $\ket{0}$ state we sequentially apply $H^{\otimes n}\rightarrow U_{f_1}\rightarrow H^{\otimes n}\rightarrow U_{f_2} \rightarrow H^{\otimes n}$ and controlled on the driving qubit being in the $\ket{1}$ state, we fist apply $H^{\otimes n}$ and then apply the oracle $U_{f_3}$, where $H^{\otimes n}$ is applied on the $n$ query-registers (refer to Figure \ref{fig:q2f3}) and the oracles are applied on all but the driving qubit.
Finally, we measure the first qubit in the Hadamard basis, which is equivalent to applying a Hadamard gate in the first qubit followed by the measurement in the $\{\ket{0},\ket{1}\}$ basis. As a result, the probability of obtaining $0$ upon measuring the driving qubit becomes $\frac{1}{2}\left(1+\Phi_{f_1,f_2,f_3}\right)$.
We represent this algorithm as a unitary operation $\Af(f_1,f_2,f_3)$ 
which has access to the oracles of the functions $f_1, f_2$ and $f_3$ 
and finally outputs $0$ with probability $\frac{1+\Phi_{f_1,f_2,f_3}}{2}$
and $1$ with probability $\frac{1-\Phi_{f_1,f_2,f_3}}{2}$. The quantum structure of this algorithm $\Af(f_1,f_2,f_3)$ is depicted in Figure~\ref{fig:q2f3}.
\begin{figure}[H]
\begin{center}
\includegraphics[scale=0.45]{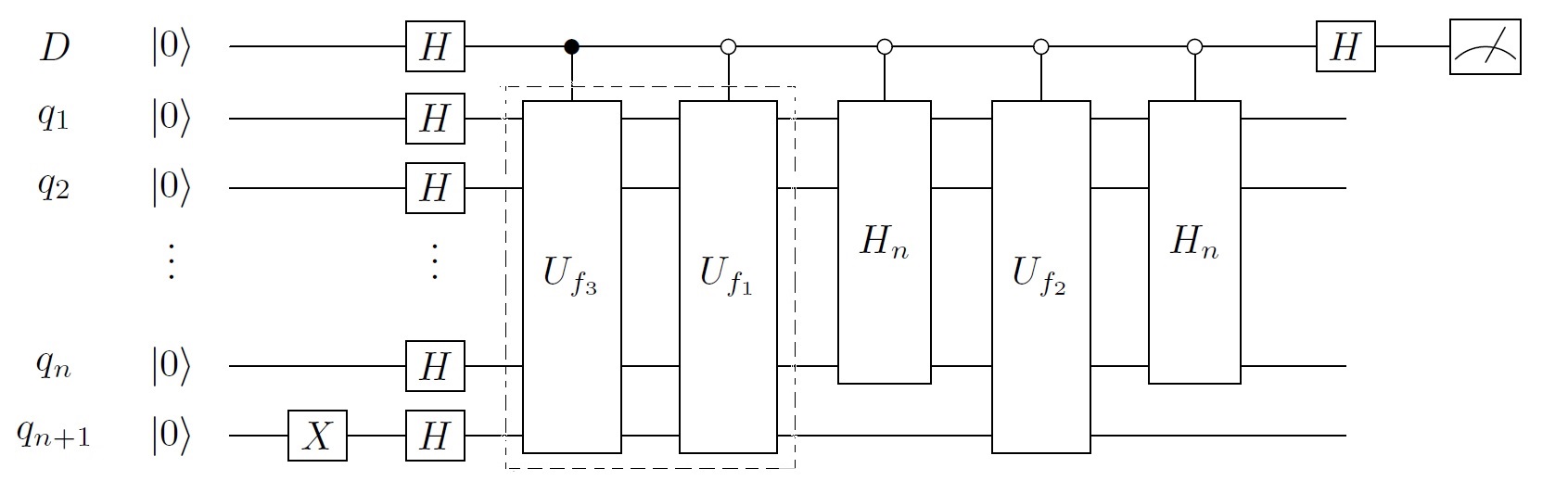}
\caption{Quantum circuit for implementing the $3$-fold Forrelation problem using $2$ queries.}
\label{fig:q2f3}
\end{center}
\end{figure}
\end{enumerate}
We now look into the Walsh spectrum sampling of a Boolean function $f$ at any given point $\bm{\omega}$ using the algorithms $\Af$ and $\As$ for the $3$-fold Forrelation and compare the sampling probabilities with that of the Deutsch-Jozsa algorithm.

\subsection{Walsh Spectrum Sampling using $3$-fold Forrelation}
\label{sub:sampling}
Consider an $n$-variable Boolean function $f$ and a set of points $S \subseteq\{0,1\}^n$. Then the probability of obtaining any one of the states from $S$ in the outcome of the Deutsch-Jozsa algorithm for $f$ is given by 
\begin{equation}
\label{eqss1}
p=\displaystyle\frac{1}{2^{2n}}\sum_{\xv\in S}{W_f(\xv)^2}.
\end{equation}
Next we sample the Walsh spectrum values of $f$ on the set $S$ using $3$-fold Forrelation. We define the Forrelation set up for $\Phi_{f_1,f_2,f_3}$ as follows. We have $f_1=f_3=f$ and the function $f_2=g$ is designed by us, depending upon the set $S$ such that $g(\xv)=-1$ if and only if $\xv\in S$ and $g(\xv)=1$, otherwise. Then we have,
$$\Phi_{f,g,f} =\frac{1}{2^{2n}}\sum_{\xv\in\{0,1\}^n}g(\xv)W_f(\xv)^2 = \frac{1}{2^{2n}} \left(\sum_{\xv \not\in S}W_f(\xv)^2 - \sum_{\xv\in S}{W_f(\xv)^2}\right).$$
Now applying the Parseval's identity, $\sum_{\xv \in \{0,1\}^n} W_f(\xv)^2=\sum_{\xv \not\in S}W_f(\xv)^2 + \sum_{\xv\in S}{W_f(\xv)^2}= 2^{2n}$ we obtain,
$$\Phi_{f,g,f} = \frac{1}{2^{2n}} \left( 2^{2n}-\sum_{\xv\in S}{W_f(\xv)^2} - \sum_{\xv\in S}{W_f(\xv)^2} \right)
= 1-\frac{2}{2^{2n}}\sum_{\xv\in S}{W_f(\xv)^2}.$$
In this regard, we have the following result.
\begin{lemma}
\label{lem:2q}
The probability of getting the output $1$ upon running the algorithm $\Af(f,g,f)$ for $3$-fold Forrelation where $g(\xv)=-1,\,\forall\xv\in S$ and $g(\xv)=1$, otherwise, is same as $p$ as in Equation~\eqref{eqss1}.
\end{lemma}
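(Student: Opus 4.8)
The plan is to directly compute the probability of the output $1$ from $\Af(f,g,f)$ using the stated behavior of the $2$-query $3$-fold Forrelation algorithm, and then match it against the expression for $p$ in Equation~\eqref{eqss1}. Recall from the description of $\Af$ that the probability of obtaining the output $0$ is $\frac{1+\Phi_{f_1,f_2,f_3}}{2}$, and hence the probability of obtaining $1$ is $\frac{1-\Phi_{f_1,f_2,f_3}}{2}$. So the entire argument hinges on substituting the value of $\Phi_{f,g,f}$ that has already been derived in the lines immediately preceding the lemma statement.

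First I would invoke the identity established just above the lemma, namely
$$\Phi_{f,g,f} = 1 - \frac{2}{2^{2n}}\sum_{\xv \in S} W_f(\xv)^2,$$
which follows from writing $\Phi_{f,g,f} = \frac{1}{2^{2n}}\sum_{\xv} g(\xv) W_f(\xv)^2$, splitting the sum over $S$ and its complement according to the definition of $g$, and applying Parseval's identity $\sum_{\xv \in \{0,1\}^n} W_f(\xv)^2 = 2^{2n}$. Then I would substitute this into the output-$1$ probability of $\Af$:
$$\Pr[\text{output } 1] = \frac{1 - \Phi_{f,g,f}}{2} = \frac{1}{2}\left(1 - 1 + \frac{2}{2^{2n}}\sum_{\xv \in S} W_f(\xv)^2\right) = \frac{1}{2^{2n}}\sum_{\xv \in S} W_f(\xv)^2,$$
which is exactly $p$ as defined in Equation~\eqref{eqss1}. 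This completes the proof.

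Honestly, there is no real obstacle here: the lemma is essentially a bookkeeping corollary of the Forrelation value computation already carried out in the text together with the known output statistics of the $2$-query algorithm $\Af$. If anything, the only point requiring a sentence of care is making sure the event whose probability is $p$ — "obtaining one of the states of $S$ in the Deutsch-Jozsa output" — is correctly paired with the event "output $1$ in $\Af$" rather than "output $0$"; the sign flip in the definition of $g$ (with $g = -1$ precisely on $S$) is what makes the correspondence land on the $1$ outcome, and I would note this explicitly so the reader sees why it is $\Af$'s $1$-probability and not its $0$-probability that reproduces $p$. Beyond that, the computation is a one-line substitution.
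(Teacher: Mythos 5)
Your proposal is correct and follows exactly the paper's own argument: both substitute the previously derived value $\Phi_{f,g,f}=1-\frac{2}{2^{2n}}\sum_{\xv\in S}W_f(\xv)^2$ into the output-$1$ probability $\frac{1-\Phi_{f,g,f}}{2}$ of $\Af$ and simplify to $p$. Your extra remark about why the $1$-outcome (rather than the $0$-outcome) is the one matching $p$ is a reasonable clarification but does not change the substance.
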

\begin{proof}
From $\Af(f,g,f)$, the probability of getting the output $1$ is
$$\frac{1-\Phi_{f,g,f}}{2}= \frac{1}{2}\left[1-\left(1-\frac{2}{2^{2n}}\sum_{\xv\in S}{W_f(\xv)^2}\right)\right]
=\frac{1}{2^{2n}}\sum_{\xv\in S}{W_f(\xv)^2}=p.$$
\end{proof}
Thus the $2$-query algorithm $\Af(f,g,f)$ makes a single query to $f$ and another query to $g$, designed based on the set $S$, behaves equivalent to the Deutsch-Jozsa algorithm in terms of Walsh spectrum sampling. Now we show that $\Af(f,g,f)$ can be used here to obtain the improvement.

\begin{theorem}
\label{thm:3q}
The probability of getting an output with at least one bit being $1$ upon running the algorithm $\As(f,g,f)$ for $3$-fold Forrelation is given by $4p-4p^2$, where $p$ is as in Equation~\eqref{eqss1}.
\end{theorem}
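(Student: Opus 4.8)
The plan is to compute the success probability of $\As(f,g,f)$ directly from the established fact that $\As(f_1,f_2,f_3)$ outputs the all-zero string $0^n$ with probability $(\Phi_{f_1,f_2,f_3})^2$, and then take the complementary event. First I would invoke the computation carried out just above the statement of Lemma~\ref{lem:2q}, namely that with $f_1 = f_3 = f$ and $f_2 = g$ the indicator of $S$, one has $\Phi_{f,g,f} = 1 - \frac{2}{2^{2n}}\sum_{\xv \in S} W_f(\xv)^2 = 1 - 2p$, where $p$ is the Deutsch--Jozsa sampling probability of Equation~\eqref{eqss1}. This is the key algebraic link and it is already derived in the text, so I may quote it outright.

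Next I would apply the output guarantee of the $3$-query algorithm $\As$: the probability of measuring the $n$ predetermined qubits and observing $0^n$ equals $(\Phi_{f,g,f})^2 = (1-2p)^2$. The event "the output has at least one bit equal to $1$" is precisely the complement of "the output is $0^n$", so its probability is
\[
1 - (1-2p)^2 = 1 - (1 - 4p + 4p^2) = 4p - 4p^2,
\]
which is the claimed expression. That completes the proof.

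The only thing to be careful about — and I would state it explicitly — is the substitution of the same function $f$ into both the $f_1$ and $f_3$ slots, which means a single oracle $U_f$ is queried twice; this is legitimate in the query model and does not affect the amplitude analysis of $\As$, which treats the three oracle calls independently regardless of whether the underlying functions coincide. There is no genuine obstacle here: the result is a one-line consequence of the already-derived value of $\Phi_{f,g,f}$ together with the known behavior of $\As$. If anything, the "hard part" is purely expository — making clear that $4p-4p^2 \ge p$ for $p \in [0, 3/4]$, which is what justifies the claim in the surrounding discussion that $\As(f,g,f)$ outperforms Deutsch--Jozsa for the relevant range of $p$; I would include that elementary inequality as a remark right after the proof rather than inside it.
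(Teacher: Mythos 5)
Your proposal is correct and follows essentially the same route as the paper: both establish $\Phi_{f,g,f}=1-2p$ (the paper reads this off from Lemma~\ref{lem:2q}, you quote the direct derivation preceding it, which is the same computation) and then take the complement $1-(1-2p)^2=4p-4p^2$. As a minor point, your write-up actually fixes a typo in the paper's printed proof, which states the final expression as $4-4p^2$ rather than $4p-4p^2$.
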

\begin{proof}
From Lemma \ref{lem:2q}, we obtain $\frac{1-\Phi_{f,g,f}}{2}=p$, which implies $\Phi_{f,g,f}=1-2p$. Now, the probability of getting an output with at least one bit being $1$ upon running the algorithm $\As(f,g,f)$ is 
$1-\Phi_{f,g,f}^2= 1-\left(1-2p \right)^2=4-4p^2.$
\end{proof}

For $p < 0.75$, we have $4p-4p^2 > p$. Moreover, if one argues that 
$\As$ makes two queries to $f$ compared one by Deutsch-Jozsa, 
it is easy to see that the probability of observing any one of the states from $S$ at least once by running the Deutsch-Jozsa algorithm twice is $1- (1-p)^2=2p-p^2$
which is also lower than the probability due to $\As$ for small values of $p$.
In fact we can compare the results due to $\As$, sampling from Deutsch-Jozsa once, sampling from Deutsch-Jozsa twice and sampling using Deutsch-Jozsa followed by one round amplitude amplification
for small values of $p$ in the following manner. 

\begin{enumerate}
\item
The probability of observing any one of the states from $S$ after running the DJ algorithm is $p$. This requires one query made to the oracle of $f$.
\item
The probability of observing any one of the states from $S$ at least once, after running Deutsch-Jozsa algorithm
twice is $2p-p^2 \approx 2p$.
This requires two queries to the oracle of $f$.
\item
If we first apply the Deutsch-Jozsa and then amplify the states from $S$ using a single round of amplitude amplification,
then at first we have $\theta=\sin^{-1}p$ and after a round the probability becomes $\sin \left( 3\sin^{-1}p \right)$.
Since for small values of $\theta$ we have $\theta \approx \sin \theta$, the probability of observing any one of the states from $S$ becomes $\approx 3p$. This method also requires two queries to $f$, one for the initial 
Deutsch-Jozsa and once for designing the inversion about mean operator.

\item
Finally, if we use the algorithm $\As(f,g,f)$, where $g(\xv)=-1,\,\forall\xv\in S$ and $g(\xv)=1$ otherwise,
then the probability obtaining atleast one bit being $1$, is given by $4p-4p^2\approx 4p$ which outperforms all the techniques discussed above for small values of $p$.
\end{enumerate}
Figure~\ref{fig:graph} presents a schematic view of these probability values.

\begin{figure}[H]
\begin{center}
\includegraphics[scale=0.48]{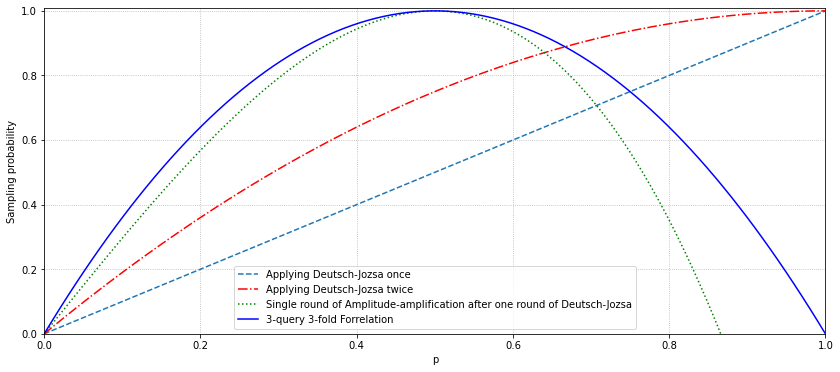}
\caption{Sampling probabilities of Walsh transform using different algorithms}
\label{fig:graph}
\end{center}
\end{figure}

Hence, the $3$-query algorithm for $3$-fold Forrelation, $\As(f,g,f)$ samples the Walsh spectrum values of $f$ at any given set of points, $S$ more efficiently compared to the Deutsch-Jozsa algorithm.
This result underlines how Forrelation can be used for efficient sampling of the Walsh spectra of a Boolean functions at any given point.

Observe that for $\left|S\right|=1$, the $3$-fold Forrelation algorithm $\As(f,g,f)$ samples the Walsh transform value of $f$ at any given point more efficiently compared to the DJ algorithm.
Moreover, if we consider $S$ such that $S=\{\xv:wt(\xv)\leq m\}$, then using $\As(f,g,f)$ we obtain a better sampling of Walsh spectra for the points with Hamming weight less than or equal to a particular weight $m$ compared to the DJ algorithm. This provides a constant improvement over the state-of-the-art result \cite{chak}.

\subsection{Implications to Resiliency checking}
Given the problem where one has to decide whether a function $f$ is $m$-resilient, we can simply apply Deutsch-Jozsa on $f$ and obtain the resultant state, $\frac{1}{2^n}\sum_{\xv \in \{0,1\}^n} W_f(\xv)\ket{\xv}$.
If the function is indeed $m$-resilient the probability of getting an output of weight 
less than or equal to $m$ is zero. However, in the other case, the probability of observing such a state $\xv$ with $wt(\xv)\leq m$ depends upon how much of the Walsh spectrum mass is distributed on points 
with weight $\leq m$.
Therefore, absence of such a state $\xv$ with $wt(\xv)\leq m$ can not conclude that $f$ is `$m$-resilient'. However, if we observe one such state even once, that is 
sufficient to conclude that the function $f$ is not $m$-resilient.
The states $\xv$ with $wt(\xv)\leq m$ 
can be termed as an `undesirable outcome', presence of which deterministically concludes about $f$ being non-resilient. Thus any algorithm should try to 
determine whether $f$ is ``not $m$-resilient" to the best of its ability and declare $f$ 
is $m$-resilient if it does not obtain any undesirable outcome after certain number 
of executions of the algorithm. This was also the methodology of~\cite{chak}.

Here we have the following observations.
\begin{itemize}
\item The function $4p-4p^2$ becomes a decreasing function 
for $p>\frac{1}{2}$. This is however very simple to deal with. 
At the beginning, we sample with Deutsch-Jozsa some constant number of times and see if 
we obtain any undesirable outcome. If $p$ is indeed greater than $\frac{1}{2}$ 
then we will obtain an undesirable outcome with a high probability. Otherwise, 
we apply $\As(f,g,f)$ which samples lower values of Walsh spectra better compared to DJ algorithm.
\item If $p$ is very low, the probability of obtaining an undesirable outcome also becomes very low in both the Deutsch-Jozsa set-up and the $3$-fold Forrelation set-up. 
Against this backdrop the amplitude amplification algorithm~\cite{brass} 
was implemented by~\cite{chak} on top of the Deutsch-Jozsa 
algorithm to amplify the amplitude of the undesirable state and increase the
probability of obtaining an undesirable state, if one exists for the given function.
The outcome of our algorithm due to $\As$ can also be amplified 
in a similar fashion, and the overall circuit complexity is of the same asymptotic order.
\end{itemize}

Since, the starting state of $\As$ makes the sampling probability of an undesirable outcome
$4$-times more likely compared to Deutsch-Jozsa, therefore, we need to run the amplification iterate one-fourth of the times to obtain the similar result compared to the result in~\cite{chak}. 
However the application of $\As$ requires two oracle
queries to $f$. 
Therefore if the amplification iterate is run $n$ times for $\As$ then it requires
a total of $2n$ many queries to $f$, compared to $4n$ queries in case of \cite{chak}. 
Therefore in total the query complexity due to $\As$ is half as compared to the 
result by~\cite{chak}.
The focus of this paper is to design the starting states before the sampling algorithms 
and therefore further study about the detailed procedure is beyond the scope of this paper.

Finally, in the next section, we discuss how this algorithm can also be used to sample cross-correlation values of two 
functions by simply replacing the symmetric function that is $f_2$'s placeholder, with a linear function of our choice. We also 
consider tweaking the quantum algorithm for 3-fold Forrelation to obtain further results.

\section{Cross-correlation and $3$-fold Forrelation}
\label{sec:5}
The cross-correlation $C_{f,g}$ of two functions $f$ and $g$ relates to the 
property of Shannon's confusion in a cryptographic system, and is an widely
studied cryptographic property. Cross-correlation, and autocorrelation 
spectra have a very significant difference from the Walsh Spectrum. 
For the Walsh Spectrum of any function $f$, we have 
$\frac{1}{2^{2n}}\sum_{\xv \in \{0,1\}^n}W_f(\xv)^2=1$.
This lies in the heart of the Deutsch-Jozsa algorithm, which allows us to 
obtain any $n$ bit string $\xv$ with probability $\frac{W_f(\xv)^2}{2^{2n}}$
with the sum of probabilities adding upto $1$. There is no such identity 
for the cross-correlation spectrum. We have the bound of 
$2^{2n} \leq \sum_{\xv \in \{0,1\}^n} C_{f,g}(\xv)^2 \leq 2^{3n}$.
This stops us from designing a Deutsch-Jozsa like sampling algorithm 
for cross-correlation and autocorrelation spectra. 

In the work~\cite{bera} Bera et al obtained an autocorrelation 
spectrum sampling algorithm that produced a $2n$ bit state $\xv||0^n$
with probability $\frac{C_f(\xv)^2}{2^{3n}}$ and also autocorrelation 
estimation algorithms in the quantum framework. 
The former result was based on studying the connection between Walsh spectra
of derivatives of $f$ and the autocorrelation spectrum of $f$, and thus the
same algorithm cannot be extended for cross-correlation in its current form. 
Against this backdrop, we design cross-correlation estimation and sampling algorithms
(and as corollaries they also imply to the autocorrelation spectrum)
in this section, starting with cross-correlation estimation. 

We start with a very interesting observation of~\cite{cross} which leads us to our 
first algorithm.

\begin{theorem}[\cite{cross}]
\label{th:cross}
Given any two functions $f$ and $g$, we have
$$[C_{f,g}(000\ldots 0), \ldots ,C_{f,g}(111\ldots 1)] \hat{H}_n= [W_fW_g(000\ldots 0), \ldots ,W_fW_g(111\ldots 1)]$$
where $\hat{H}_n= 
{\begin{pmatrix}
1 &1\\
1 &-1
\end{pmatrix}}^{\otimes n}
.$
\end{theorem}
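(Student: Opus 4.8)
The plan is to verify the vector--matrix identity one coordinate at a time. Recall that the rows and columns of $\hat{H}_n$ are indexed by $\{0,1\}^n$ and that its $(\yv,\bm\omega)$ entry is $(-1)^{\yv\cdot\bm\omega}$; hence the $\bm\omega$-th coordinate of the left-hand side is $\sum_{\yv\in\{0,1\}^n} C_{f,g}(\yv)(-1)^{\yv\cdot\bm\omega}$. First I would substitute the definition $C_{f,g}(\yv)=\sum_{\xv\in\{0,1\}^n} f(\xv)\,g(\xv\op\yv)$ and interchange the two finite sums to obtain $\sum_{\xv} f(\xv)\sum_{\yv} g(\xv\op\yv)(-1)^{\yv\cdot\bm\omega}$.

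Next, for each fixed $\xv$ I would reindex the inner sum by the substitution $\mathbf{z}=\xv\op\yv$, which is a bijection of $\{0,1\}^n$ onto itself, so that $\yv=\xv\op\mathbf{z}$. The one nontrivial ingredient is the $\F$-bilinearity of the inner product, $(\xv\op\mathbf{z})\cdot\bm\omega=\xv\cdot\bm\omega\op\mathbf{z}\cdot\bm\omega$, which gives $(-1)^{\yv\cdot\bm\omega}=(-1)^{\xv\cdot\bm\omega}(-1)^{\mathbf{z}\cdot\bm\omega}$. The factor $(-1)^{\xv\cdot\bm\omega}$ then leaves the inner sum and the double sum separates as a product:
\begin{align*}
\sum_{\yv} C_{f,g}(\yv)(-1)^{\yv\cdot\bm\omega}
&=\Bigl(\sum_{\xv} f(\xv)(-1)^{\xv\cdot\bm\omega}\Bigr)\Bigl(\sum_{\mathbf{z}} g(\mathbf{z})(-1)^{\mathbf{z}\cdot\bm\omega}\Bigr)\\
&=W_f(\bm\omega)\,W_g(\bm\omega),
\end{align*}
which is exactly the $\bm\omega$-th coordinate of the right-hand side. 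Since $\bm\omega\in\{0,1\}^n$ was arbitrary, the claimed identity follows.

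I do not expect any genuine obstacle here: the whole argument is a change of variables together with the fact that XOR in the exponent of $-1$ becomes a product of signs. The only points requiring (minor) care are that $\yv\mapsto\xv\op\yv$ is a bijection, which legitimizes the reindexing, and that the outer sum over $\xv$ and the reindexed inner sum over $\mathbf{z}$ become independent, so they factor into $W_f(\bm\omega)W_g(\bm\omega)$. As a sanity check, combined with $(\hat{H}_n)^2=2^nI_n$ this lets one invert the relation and recover $C_{f,g}$ from the pointwise products $W_fW_g$, which is the direction we actually exploit when placing a linear function in the $f_2$ slot of the $3$-fold Forrelation algorithms.
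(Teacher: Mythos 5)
Your computation is correct and complete: expanding the $\bm\omega$-th coordinate, swapping the finite sums, reindexing by $\mathbf{z}=\xv\op\yv$, and using $(-1)^{(\xv\op\mathbf{z})\cdot\bm\omega}=(-1)^{\xv\cdot\bm\omega}(-1)^{\mathbf{z}\cdot\bm\omega}$ is exactly the standard convolution-style argument. Note that the paper itself gives no proof of this statement --- it is imported verbatim from the cited reference (Theorem 3.1 of Sarkar--Maitra) --- so there is no in-paper argument to compare against; your derivation is the canonical one and nothing is missing.
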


Here, observe that $\hat{H}_n$ is a $2^n \times 2^n$ ``non-normalized'' Hadamard matrix and thus 
we have $\left( \hat{H}_n \right)^2= 2^n I_n$  where $I_n$ is the $2^n$-dimensional identity matrix. 
Thus multiplying $\hat{H}_n$ to both sides of the above identity results in the following equation.
\begin{equation}
\label{eq:1}
[C_{f,g}(000\ldots 0), \ldots ,C_{f,g}(111\ldots 1)] 2^n I_n=
[W_fW_g(000\ldots 0), \ldots ,W_fW_g(111\ldots 1)] \hat{H}_n.
\end{equation}

We also fix a general notation to denote the $2^n$ linear functions on $n$ variables,
denoting the functions as $\mathbb{L}_{\yv}(\xv) =(-1)^{\oplus_{i:y_i=1}x_i}$. 
In this regard we have the following well known result.

\begin{fact}
\label{fact:1}
The $2^n$ linear functions have a one-to-one correspondence with the columns of $H_n$.
That is, $H_n[\var{i}][\var{j}]=
\mathbb{L}_{\var{i}}(\var{j}), ~\var{j} \in \{0,1\}^n$ 
for all $\var{i} \in \{0,1\}^n$.
\end{fact}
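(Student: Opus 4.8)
The plan is to verify the identity $H_n[\var{i}][\var{j}] = \mathbb{L}_{\var{i}}(\var{j})$ entrywise, directly from the Kronecker-product definition of $H_n$, and then read off the bijection. First I would record that the $2\times 2$ matrix $\begin{pmatrix} 1 & 1 \\ 1 & -1 \end{pmatrix}$ has entry $(-1)^{ab}$ in position $(a,b)$ for $a,b \in \{0,1\}$. Since $H_n$ is the $n$-fold Kronecker power of this matrix, indexing its rows and columns by bit strings $\var{i},\var{j} \in \{0,1\}^n$ in the natural (lexicographic) order compatible with the block structure of the tensor power, the definition of the Kronecker product gives $H_n[\var{i}][\var{j}] = \prod_{k=1}^{n} (-1)^{i_k j_k} = (-1)^{\sum_{k=1}^{n} i_k j_k}$. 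I would make this precise by a short induction on $n$, peeling off one tensor factor at a time so that the $2\times 2$ block decomposition of $H_1^{\otimes n}$ lines up with the leading bit of the index.

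Next I would compare the right-hand side with the definition $\mathbb{L}_{\var{i}}(\var{j}) = (-1)^{\oplus_{k:\, i_k = 1} j_k}$. The exponent there is the $\F$-sum (parity) of those $j_k$ for which $i_k = 1$, which is precisely $\sum_{k=1}^{n} i_k j_k$ reduced modulo $2$; since $(-1)^m$ depends only on $m \bmod 2$, we obtain $\mathbb{L}_{\var{i}}(\var{j}) = (-1)^{\sum_{k} i_k j_k} = H_n[\var{i}][\var{j}]$, which is the stated equality. Because $H_n$ is symmetric, ``rows'' and ``columns'' may be used interchangeably throughout, so the same identity also exhibits the columns of $H_n$ as the truth tables of the linear functions.

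Finally, for the ``one-to-one correspondence'' claim I would observe that $\var{i} \mapsto \mathbb{L}_{\var{i}}$ is injective: evaluating $\mathbb{L}_{\var{i}}$ at the $k$-th standard basis vector returns $(-1)^{i_k}$, so the truth table of $\mathbb{L}_{\var{i}}$ determines each bit $i_k$ and hence $\var{i}$. As there are exactly $2^n$ strings $\var{i}$, exactly $2^n$ rows of $H_n$, and every linear Boolean function on $n$ variables is of the form $\mathbb{L}_{\var{i}}$ for some $\var{i}$, the assignment $\var{i} \leftrightarrow (\var{i}\text{-th row of } H_n)$ is a bijection between $\{0,1\}^n$, the rows of $H_n$, and the set of all linear Boolean functions. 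The main obstacle, such as it is, is purely notational: one must fix a single ordering of $\{0,1\}^n$ and use it consistently to index the rows and columns of $H_n$ and to match against the recursive block form of $H_1^{\otimes n}$, after which the computation is routine.
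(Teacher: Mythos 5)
Your proof is correct. The paper states this as a ``well known result'' and offers no proof at all, so there is nothing to compare against; your entrywise verification via the Kronecker-product formula $H_n[\var{i}][\var{j}]=(-1)^{\sum_k i_k j_k}$, the identification of that exponent mod $2$ with $\oplus_{k:i_k=1}j_k$, and the injectivity check at standard basis vectors is exactly the standard argument one would supply. Your remark that symmetry of $H_n$ reconciles the statement's mixed use of ``rows'' and ``columns'' is a worthwhile clarification, as is your implicit (and correct) reading of $H_n$ here as the non-normalized $\pm 1$ matrix $\hat{H}_n$, which is what the surrounding computations in the paper actually use.
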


Using these results along with the $3$-fold Forrelation formulation we get the following results.

\begin{theorem}
\label{th:main}
Given oracle access to two functions $f$ and $g$ on $n$ variables,
two algorithms $A_1$ and $A_2$ can be designed in a way that they 
make one query to $f$ and $g$ each and makes another 
query to a linear function $\mathbb{L}_{\yv}$ such that 

\begin{enumerate}

\item 
In algorithm $A_1$, upon measuring $n$ predetermined qubits in the computational basis,
the probability of obtaining the all zero state is $\frac{\left( C_{f,g}(\yv) \right)^2}{2^{2n}}$.

\item 
In algorithm $A_2$, upon measuring $1$ predetermined qubit in the computational basis, the probability of obtaining
the output $0$ is  $\frac{1}{2}\left(1+\frac{C_{f,g}(\yv)}{2^n}\right)$.

\end{enumerate}

\end{theorem}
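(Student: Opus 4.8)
The plan is to realize both algorithms as instantiations of the $3$-fold Forrelation algorithms $\As$ and $\Af$ with a carefully chosen linear function placed in the $f_2$ slot, and then to show that the relevant output probabilities reduce to the desired cross-correlation expressions via Theorem~\ref{th:cross}, Equation~\eqref{eq:1}, and Fact~\ref{fact:1}. Concretely, for a target point $\yv$ I would set $f_1 = f$, $f_3 = g$, and $f_2 = \mathbb{L}_{\yv}$, the linear function indexed by $\yv$. Recall from Section~\ref{sub:representation} that $\Phi_{f_1,f_2,f_3} = \frac{1}{2^{2n}} \sum_{\xv \in \{0,1\}^n} f_2(\xv) W_{f_1}(\xv) W_{f_3}(\xv)$, so with this choice
$$\Phi_{f,\mathbb{L}_{\yv},g} = \frac{1}{2^{2n}} \sum_{\xv \in \{0,1\}^n} \mathbb{L}_{\yv}(\xv) W_f(\xv) W_g(\xv) = \frac{1}{2^{2n}} \sum_{\xv \in \{0,1\}^n} (-1)^{\yv \cdot \xv} W_fW_g(\xv).$$

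The next step is to recognize the sum on the right as a single entry of the vector $[W_fW_g(\cdot)] \hat{H}_n$. By Fact~\ref{fact:1}, the column of $H_n$ indexed by $\yv$ has entries $\mathbb{L}_{\yv}(\xv) = (-1)^{\yv\cdot\xv}$, so $\sum_{\xv} (-1)^{\yv\cdot\xv} W_fW_g(\xv)$ is exactly the $\yv$-th coordinate of $[W_fW_g(000\ldots0),\ldots,W_fW_g(111\ldots1)]\,\hat{H}_n$. By Equation~\eqref{eq:1}, that coordinate equals $2^n C_{f,g}(\yv)$. Hence
$$\Phi_{f,\mathbb{L}_{\yv},g} = \frac{1}{2^{2n}} \cdot 2^n C_{f,g}(\yv) = \frac{C_{f,g}(\yv)}{2^n}.$$
With this identity in hand, the two conclusions are immediate from the known output behaviour of the two Forrelation algorithms: taking $A_1 = \As(f,\mathbb{L}_{\yv},g)$, the probability of the all-zero outcome on the $n$ predetermined qubits is $\left(\Phi_{f,\mathbb{L}_{\yv},g}\right)^2 = \frac{(C_{f,g}(\yv))^2}{2^{2n}}$; taking $A_2 = \Af(f,\mathbb{L}_{\yv},g)$, the probability of measuring $0$ on the driving qubit is $\frac{1+\Phi_{f,\mathbb{L}_{\yv},g}}{2} = \frac{1}{2}\left(1 + \frac{C_{f,g}(\yv)}{2^n}\right)$. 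Both algorithms make exactly one oracle query to $f$, one to $g$, and one to the linear function $\mathbb{L}_{\yv}$ (which we construct ourselves and can query directly), matching the query count in the statement.

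I expect the only genuine subtlety to be bookkeeping around the normalization conventions: specifically, keeping the "non-normalized" Hadamard matrix $\hat{H}_n$ (with $\hat{H}_n^2 = 2^n I_n$) consistent with the matrix $H_n$ whose columns index the linear functions in Fact~\ref{fact:1}, and confirming that the index $\yv$ used to name the linear function $\mathbb{L}_{\yv}$ lines up with the $\yv$-th coordinate of the cross-correlation vector in Theorem~\ref{th:cross}. Once the $\yv$-th coordinate of $[W_fW_g]\hat{H}_n$ is correctly identified as $2^n C_{f,g}(\yv)$ (rather than, say, $C_{f,g}(\yv)$ or $2^{2n}C_{f,g}(\yv)$), the remaining arithmetic is routine substitution into the previously stated probability formulas for $\As$ and $\Af$. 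A minor point worth stating explicitly is that $\mathbb{L}_{\yv}$ maps into $\{1,-1\}$, so it is a legitimate function in the sense required by the $3$-fold Forrelation setup, and its oracle $U_{\mathbb{L}_{\yv}}$ can be implemented with a single layer of controlled-$Z$ (or phase) gates, which is why we count it as one query.
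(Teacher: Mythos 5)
Your proposal is correct and follows essentially the same route as the paper: instantiate $\As(f,\mathbb{L}_{\yv},g)$ and $\Af(f,\mathbb{L}_{\yv},g)$, identify $\mathbb{L}_{\yv}(\xv)$ with the $(\yv,\xv)$ entry of the Hadamard matrix via Fact~\ref{fact:1}, and use Equation~\eqref{eq:1} to conclude $\Phi_{f,\mathbb{L}_{\yv},g}=\frac{C_{f,g}(\yv)}{2^n}$, from which both probability claims follow. Your explicit remarks on the normalization of $\hat{H}_n$ versus $H_n$ and on implementing $U_{\mathbb{L}_{\yv}}$ directly are sensible additions but not a departure from the paper's argument.
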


\begin{proof}
Given the functions $f,g$ and the linear function $\mathbb{L}_{\yv}$,
let $A_1$ be the algorithm $\As(f,\mathbb{L}_{\yv},g)$. 
Then the probability of obtaining the all zero state upon measurement is given by $\left( \Phi_{f,\mathbb{L}_{\yv},g} \right)^2$
where the Forrelation value equates to
\begin{align*}
\Phi_{f,\mathbb{L}_{\yv},g}
&=\frac{1}{2^{2n}}
\sum\limits_{\xv \in \{0,1\}^n} W_f(\xv)W_g(\xv)\mathbb{L}_{\yv}(\xv)
=\frac{1}{2^{2n}}
\sum\limits_{\xv \in \{0,1\}^n} W_f(\xv)W_g(\xv)H_n[\yv][\xv]=\frac{2^nC_{f,g}(\yv)}{2^{2n}}=\frac{C_{f,g}(\yv)}{2^{n}}.
\end{align*} 
Thus the probability of getting the all zero state as measurement outcome is
$\frac{\left( C_{f,g}(\yv) \right)^2}{2^{2n}}$.

Similarly we can define $A_2=\Af(f,\mathbb{L}_{\yv},g)$ and thus we obtain 
the output $0$ with probability $\frac{1}{2}\left(1+\frac{C_{f,g}(\yv)}{2^n}\right)$.
\end{proof}

This gives us a constant query algorithm for sampling the cross-correlation value 
of any two functions at any given point. Here note that one needs to design different
algorithms (and thus circuits) in order to obtain the cross-correlation value at different 
points. Let us first compare this results with the results by~\cite{bera} on the autocorrelation which is $C_{f,f}$ and is simply denoted as $C_f$. 

\subsection{Comparison of auto-correlation results with~\cite{bera}}
\label{compbera}
In the first algorithm for autocorrelation-sampling described in Theorem~\ref{bera:auto}, the probability of obtaining the $2n$ bit 
string $0^n||\yv$ is $\frac{C_f(\yv)^2}{2^{3n}}$. In comparison the 
algorithm $A_1$ designed by us in Theorem~\ref{th:main} outputs the $n$ bit state $0^n$ with probability  $\frac{C_f(\yv)^2}{2^{2n}}$.
Thus $A_1$ provides the autocorrelation value at a particular point more
efficiently, although it does not sample from the autocorrelation spectrum. 
Here the following remark is important. 

\begin{remark}
Note that the algorithm described in Theorem~\ref{bera:auto} from~\cite{bera} is designed specifically for sampling 
autocorrelation spectrum of functions which relies on obtaining Walsh spectrum of derivatives of the 
function $f$. In this direction we obtain an algorithm to sample from the 
cross-correlation spectrum of any two functions $f$ and $g$ which we obtain 
by tweaking the $3$-query $3$-fold Forrelation algorithm $\As$.
\end{remark}

Next we discuss amplitude-estimation. The second result by~\cite{bera} showed
how one can have an estimation algorithm such that 
given a function $f$ there exists an algorithm that makes 
$\mathcal{O}\left( \frac{\pi}{\epsilon} \log \frac{1}{\delta} \right)$ query
and returns an estimate $\alpha$ with
$\pr\big[\alpha-\epsilon \leq \frac{C_f(\yv)^2}{2^{2n}} \leq \alpha+\epsilon \big] \geq 1-\delta$.
Here the algorithm estimates $\frac{\left(C_f{(\yv)}\right)^2}{2^{2n}}$ as probability of outcome 
of a particular state in the algorithm described in Lemma~\ref{bera:auto-est} 
is $\frac{1}{2}\left(1+\frac{C_f{(\yv)}^2}{2^{2n}}\right)$ on which the result of Lemma~\ref{bera:est}
is applied. In this regard we have the following result. 

\begin{theorem}
\label{th:usest}
Given a function $f$, there exists an algorithm that makes 
$\mathcal{O}\left( \frac{\pi}{\epsilon} \log \frac{1}{\delta} \right)$ queries to the oracle of $f$ and returns an estimate $\alpha$ such that
$$\pr[\alpha-\epsilon \leq \frac{C_{f,g}(\yv)}{2^n} \leq \alpha+\epsilon ] \geq 1-\delta.$$
\end{theorem}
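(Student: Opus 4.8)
The plan is to combine the second algorithm from Theorem~\ref{th:main} with the generic amplitude-estimation routine of Lemma~\ref{bera:est}, exactly mirroring how~\cite{bera} combined Lemma~\ref{bera:auto-est} with Lemma~\ref{bera:est} to get Theorem~\ref{th:final}. First I would take $A_2 = \Af(f,\mathbb{L}_{\yv},g)$ from Theorem~\ref{th:main}; this is a measurement-free quantum circuit (the final single-qubit measurement is deferred) such that the probability of its designated output qubit being in the $\ket{0}$ subspace is $p = \tfrac{1}{2}\big(1 + \tfrac{C_{f,g}(\yv)}{2^n}\big)$. Note in particular that since $-2^n \le C_{f,g}(\yv) \le 2^n$ we have $0 \le p \le 1$, so this is a legitimate probability and the estimation machinery applies.

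Next I would invoke Lemma~\ref{bera:est} with this circuit $A$ and this subspace. It returns, using $\mathcal{O}(\tfrac{\pi}{\epsilon}\log\tfrac{1}{\delta})$ controlled calls to $A_2$, an estimate $\tilde p$ with $\pr[\tilde p - \epsilon \le p \le \tilde p + \epsilon] \ge 1-\delta$ for any $\epsilon \le \tfrac14$. Since each invocation of $A_2$ makes exactly one query to each of $f$, $g$, and $\mathbb{L}_{\yv}$ (the last being a linear function we construct ourselves, hence free), the query complexity to the oracle of $f$ is $\mathcal{O}(\tfrac{\pi}{\epsilon}\log\tfrac1\delta)$, matching the claimed bound.

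Finally I would convert the estimate of $p$ into an estimate of $\tfrac{C_{f,g}(\yv)}{2^n}$ by the affine relation $\tfrac{C_{f,g}(\yv)}{2^n} = 2p - 1$. Setting $\alpha = 2\tilde p - 1$ and rescaling the accuracy: the event $\tilde p - \epsilon' \le p \le \tilde p + \epsilon'$ is equivalent to $\alpha - 2\epsilon' \le \tfrac{C_{f,g}(\yv)}{2^n} \le \alpha + 2\epsilon'$, so running Lemma~\ref{bera:est} with target accuracy $\epsilon' = \epsilon/2$ yields $\pr[\alpha - \epsilon \le \tfrac{C_{f,g}(\yv)}{2^n} \le \alpha + \epsilon] \ge 1-\delta$, and the query count is still $\mathcal{O}(\tfrac{\pi}{\epsilon}\log\tfrac1\delta)$ up to the constant factor absorbed by the $\mathcal{O}$.

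There is no serious obstacle here; the only point requiring a little care is the linear reparametrization and the matching rescaling of $\epsilon$ (a constant factor that the asymptotic notation hides), and noting that the statement as written presumably intends $g=f$ for the autocorrelation case, or more generally holds verbatim for the cross-correlation $C_{f,g}$ given oracle access to both $f$ and $g$. The substantive work — relating the output probability of the circuit to $C_{f,g}(\yv)$ — has already been done in Theorem~\ref{th:main}, so this theorem is essentially a corollary of Theorem~\ref{th:main} together with Lemma~\ref{bera:est}.
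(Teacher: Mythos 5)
Your proposal is correct and follows essentially the same route as the paper: instantiate $A_2=\Af(f,\mathbb{L}_{\yv},g)$ from Theorem~\ref{th:main}, feed it to the amplitude-estimation routine of Lemma~\ref{bera:est}, and note that estimating $\frac{1}{2}\bigl(1+\frac{C_{f,g}(\yv)}{2^n}\bigr)$ is equivalent to estimating $\frac{C_{f,g}(\yv)}{2^n}$ up to an affine reparametrization. The only difference is that you spell out the constant-factor rescaling of $\epsilon$ explicitly, which the paper leaves implicit.
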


\begin{proof}
We have the algorithm $A_2$ that outputs the state $0$ with probability 
$\hat{P}(f,g)=\frac{1}{2}\left(1+\frac{C_{f,g}(\yv)}{2^n}\right)$. 
This allows us to estimate $\hat{P}(f,g)$ using the result of Lemma~\ref{bera:est}. 
It is easy to see that estimating 
$\frac{1}{2}\left(1+\frac{C_{f,g}(\yv)}{2^n}\right)$ and $\frac{C_{f,g}(\yv)}{2^n}$ are equivalent,
which gives us the requisite result.
If we set $f$ and $g$ to be the same function, this translates into an algorithm 
for autocorrelation estimation. 
\end{proof}

\begin{remark}
Here note that the query complexity needed to $\epsilon$-estimate $\frac{C_f(\yv)}{2^n}$
in Theorem~\ref{th:usest} is the same as the query complexity needed to $\epsilon$-estimate
$\left(\frac{C_f(\yv)}{2^n} \right)^2$ in the work~\cite{bera}. 
This implies for comparative accuracy of $\frac{C_f(\yv)}{2^n}$, we need square-root 
of the number of queries, which is again an improvement on~\cite{bera}.
\end{remark}

Finally we move towards a cross-correlation sampling algorithm, which is the final 
contribution of our paper. As a corollary of this result we can also check 
if two functions are uncorrelated of degree $m$ for all values of $m$ upto a bound 
so that the time complexity is better than that of general classical algorithms. 
For the later part, we use the fact of existence of polynomial sized circuits 
for Dicke state preparation. Let us now define a Dicke state, denoted by $\ket{D^n_{k}}$.
\begin{definition}
\label{def:dicke}
An $n$-qubit quantum state with equal superposition of all $\binom{n}{k}$ many basis states of weight $k$ is called a Dicke state.
\end{definition}

\subsection{A cross-Correlation sampling algorithm}
Now we move towards cross-correlation sampling.
In case of sampling the cross-correlation value of two functions at a point,
we used the $3$-fold Forrelation set-up, where the second function is a linear
function of our choice. 

As a first step, we modify $\As$ so that the second function 
$\mathbb{L}_{\yv}$ can be decided by additional inputs.
Let us now look into the quantum circuit $\As$ 
corresponding to this algorithm. The corresponding circuit has $n+1$ qubits, 
which we denote in the following way.
\begin{enumerate}

\item The $n$ qubits that are used to query the oracles at different 
inputs are named from $q_1$ to $q_n$.

\item There is another qubit that stays in the $\ket{-}$ state throughout 
the circuit and used for phase-kickback, we call this $q_{n+1}$.
\end{enumerate}

Next we describe a very simple implementation of any linear function in the
quantum black-box model. 
Consider the operation $\sf CNOT^S_b$ which denotes the multi-controlled 
NOT operation where $S$ is a set consists of the control-qubits and the qubit $b$ is the target.
Then it is easy to see that any linear function $\mathbb{L}_{\yv}$ can be implemented as
the series of operations $CNOT^{q_i}_b$ such that $y_i=1$. Using this construction 
we tweak the algorithm.

\RestyleAlgo{boxruled}
%\LinesNumbered

\begin{algorithm}[ht]
\DontPrintSemicolon
  
  \KwInput{
  \begin{enumerate}
 \item  $\var{u} \in \{0,1\}^n$, 
 \item  $R \otimes Q$   where 
 \subitem $R$ is an $n$-qubit register, denoted as $\otimes_{i=1}^n \ket{r_i}$.
 \subitem $Q$ is an $n+1$-qubit register, denoted as $\otimes_{i=1}^{n+1} \ket{r_i}$.

 \item Description of an algorithm $C_n$.\\

  \end{enumerate}  
  }
  \KwOutput{The output $\var{u}||0^n$ with probability $\frac{C_{f,g}(\var{u})}{2^{2n}}$}

{\bf 	Apply the following operations in the given sequence:}

\begin{enumerate}

	\item
    $C_n$ on the register $R$ (In this case defined as $C_n^1(\var{u})$):\\
   	 \quad \quad 	apply $X$ gate on $r_i$ if $u_i=1$. \nonumber \\ 
   
   	\item
	$HX$ gate on $q_{n+1}$.

   	\item 
    $H^{\otimes n}$ on $q_1$ to $q_n$.

   	\item 		   
    Oracle of $f$ on $Q$.

   	\item    
    $H^{\otimes n}$ on $q_1$ to $q_n$.

   	\item  $DC$ defined as:   			
   	$\sf CNOT^{r_i,q_i}_{q_{n+1}}, 1 \leq i \leq n$.

   	\item 
    $H^{\otimes n}$ on $q_1$ to $q_n$.   
		  
   	\item 		   
    Oracle of $g$ on $Q$.
   
   	\item       
    $H^{\otimes n}$ on $q_1$ to $q_n$.  
 
\end{enumerate}  
\caption{The Algorithm  $\mathbb{A}(C_n)$ where $C_n=C^1_n$.}
\label{algo}
\end{algorithm}
%\vspace{0.2cm}
Figure~\ref{fig:algo} provides a schematic diagram of the algorithm $\mathbb{A}(C_n)$.\\
\begin{figure}[ht]
\begin{center}
\includegraphics[scale=0.5]{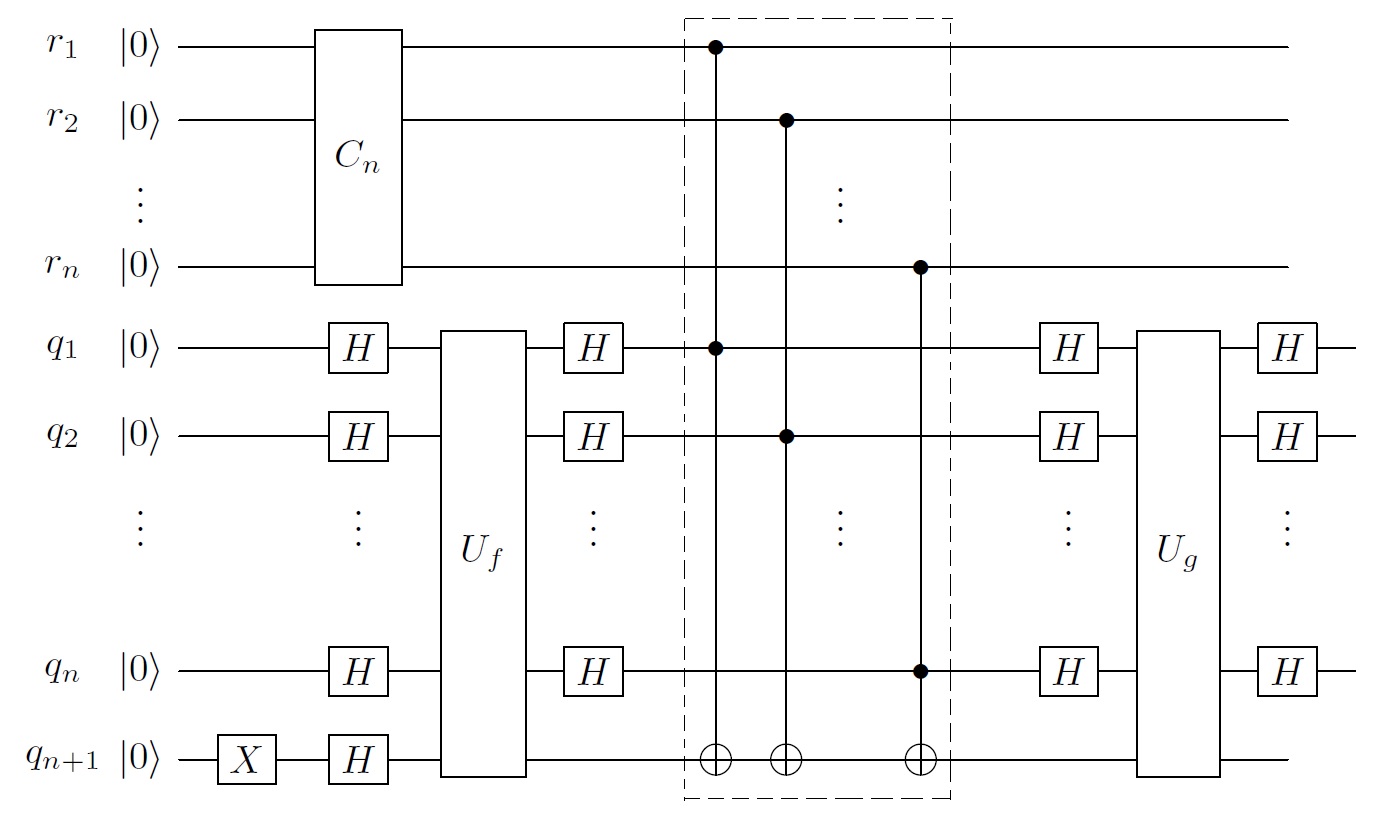}
\caption{Quantum circuit for implementing Algorithm \ref{algo}.}
\label{fig:algo}
\end{center}
\end{figure}

Let us now analyze the working of this algorithm. 
First let us observe a simple fact regarding the operator $DC$.
If $r_i=1$ for some $i$, then the operation 
$CNOT^{r_i,q_i}_{q_{n+1}}$ works as $CNOT^{q_i}_{q_{n+1}}$
which is equivalent to the oracle of the function $h(\xv)=x_i$.
Otherwise if $r_i=0$ then the operation $CNOT^{r_i,q_i}_{q_{n+1}}$
works as identity. Which can be formalized as follows. 

\begin{proposition}
\label{prop:DC}
The operation $DC$ on $R \otimes Q$ can be described as 
$DC\ket{\var{u}}\ket{\xv}\ket{-}= (-1)^{\var{u} \cdot \xv} \ket{\var{u}}\ket{\xv}\ket{-}$.
In essence, the operator $DC$ works as an oracle access to the function $\mathbb{L}_{\var{u}}$
on the register $Q$ when $R$ is in the state $\ket{\var{u}}$.
\end{proposition}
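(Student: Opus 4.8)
The plan is to verify the claimed action of $DC$ gate-by-gate, using the phase-kickback mechanism already set up in the Black-box model paragraph. Recall that $DC$ is by definition the product $\prod_{i=1}^{n} {\sf CNOT}^{r_i,q_i}_{q_{n+1}}$ of $n$ Toffoli gates, each with control qubits $r_i$ (in register $R$) and $q_i$ (in register $Q$) and common target $q_{n+1}$, which throughout the circuit sits in the state $\ket{-}$.

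First I would record the elementary observation that a doubly-controlled NOT whose target is in the eigenstate $\ket{-}$ acts, in the computational basis of its two controls, by the diagonal phase $(-1)^{ab}$ on $\ket{a}\ket{b}\ket{-}$ — this is the phase-kickback identity $U_f\ket{\xv}\ket{-}=f(\xv)\ket{\xv}\ket{-}$ specialized to the AND function of the two controls — leaving all three qubits, including the target and in particular the $R$-qubit, unchanged. Hence, on $\ket{\var{u}}\ket{\xv}\ket{-}$ the gate ${\sf CNOT}^{r_i,q_i}_{q_{n+1}}$ contributes exactly the scalar $(-1)^{u_i x_i}$.

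Next I would note that the $n$ Toffoli gates commute: each touches a distinct pair $(r_i,q_i)$ of control lines and they share only the target $q_{n+1}$, on which each acts diagonally after the kickback. Their product therefore acts on $\ket{\var{u}}\ket{\xv}\ket{-}$ by the cumulative phase $\prod_{i=1}^{n}(-1)^{u_i x_i} = (-1)^{\sum_{i=1}^{n} u_i x_i} = (-1)^{\var{u}\cdot\xv}$, returning $\ket{\var{u}}$, $\ket{\xv}$ and $\ket{-}$ unchanged, which is precisely the stated formula $DC\ket{\var{u}}\ket{\xv}\ket{-}=(-1)^{\var{u}\cdot\xv}\ket{\var{u}}\ket{\xv}\ket{-}$. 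To get the ``oracle'' reformulation I would then invoke the definition $\mathbb{L}_{\var{u}}(\xv)=(-1)^{\oplus_{i:u_i=1}x_i}$; since the exponent $\oplus_{i:u_i=1}x_i$ equals $\sum_i u_i x_i \bmod 2$, we have $(-1)^{\var{u}\cdot\xv}=\mathbb{L}_{\var{u}}(\xv)$, so on the subspace where $R=\ket{\var{u}}$ the operator $DC$ behaves exactly as the phase oracle of $\mathbb{L}_{\var{u}}$ acting on $Q$ with kickback qubit $q_{n+1}=\ket{-}$.

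There is essentially no real obstacle here; the only point needing a moment's care is confirming that register $R$ stays in the classical basis state $\ket{\var{u}}$ throughout — immediate, since its qubits appear only as controls — so that the contributions genuinely multiply into the single exponent $\var{u}\cdot\xv$ rather than leaking into entanglement with $R$. I would state the argument for a single computational-basis value of $\xv$ and for $R=\ket{\var{u}}$; linearity then extends it to arbitrary inputs of $Q$, which is exactly the form in which the proposition is used in the analysis of Algorithm~\ref{algo}.
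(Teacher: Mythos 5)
Your argument is correct and matches the paper's reasoning: the paper likewise observes that ${\sf CNOT}^{r_i,q_i}_{q_{n+1}}$ acts as the phase oracle of $x_i$ when $r_i=1$ and as the identity when $r_i=0$, and then multiplies the resulting phases to get $(-1)^{\var{u}\cdot\xv}$. Your version is just a more explicit, gate-by-gate write-up of the same phase-kickback computation.
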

Thus when $C_n$ is defined as $C^1_n(\var{u})$, the algorithm effectively 
becomes same as $\As(f,\mathbb{L}_{\var{u}},g)$ on register $Q$ and the register
$R$ stays in the state $\ket{\var{u}}$ and thus outputs $\var{u}||0^n$
with probability $\frac{C_{f,g}(\var{u})^2}{2^{2n}}$.

The cross-correlation sampling and checking if two functions are uncorrelated of
degree $m$ can be simply obtained by choosing suitable functions $C_n$, which 
is the final contribution of the paper. Let us derive the output state for 
a general $C_n$ operation.

\begin{lemma}
\label{th:tweak}
Let $C_n$ be a $2^n \times 2^n$ unitary operator 
so that $C_n\ket{0}^{\otimes n}= \sum_{\xv \in \{0,1\}^n} \alpha_{\xv}\ket{\xv}$.
Then starting from the state $\ket{0}^{\otimes n} \otimes \ket{0}^{\otimes n+1}$
the algorithm $\mathbb{A}(C_n)$ of Algorithm~\ref{algo} has the pre-measurement state 
$$
\displaystyle \sum_{\var{u} \in \{0,1\}^n}
\alpha_{\var{u}}\ket{\var{u}} 
\left( \frac{C_{f,g}(\var{u})}{2^n}\ket{0^n} + \beta_{\var{u}}\ket{W_{\var{u}}}
\right)
$$
where $\ket{W_\var{u}}$ is an $n$-qubit superposition state such that the amplitude 
of the state $\ket{0^n}$ is $0$.
\end{lemma}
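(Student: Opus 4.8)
The plan is to track the $(2n+1)$-qubit state through Algorithm~\ref{algo} step by step, using linearity to ``factor out'' the register $R$. First I would observe that steps 2--9 act only on $Q$ except for the operator $DC$ in step 6, which couples $R$ and $Q$ but acts diagonally on $R$ (by Proposition~\ref{prop:DC}). So I would write the initial state after step 1 as $\sum_{\var{u}} \alpha_{\var{u}} \ket{\var{u}}_R \otimes \ket{0}^{\otimes n}\ket{0}_Q$ (since $C_n\ket{0}^{\otimes n} = \sum_{\var{u}} \alpha_{\var{u}}\ket{\var{u}}$), and then argue that, because every subsequent operation is either $R$-independent or $R$-diagonal, the $\ket{\var{u}}_R$ component evolves independently: the whole evolution restricted to the subspace where $R = \ket{\var{u}}$ is exactly the circuit obtained by replacing $DC$ with the oracle of $\mathbb{L}_{\var{u}}$ on $Q$, i.e.\ the circuit $\As(f, \mathbb{L}_{\var{u}}, g)$ acting on $Q$.

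Next I would invoke the analysis of $\As$ already established: by the discussion in Section~\ref{sub:representation} and Theorem~\ref{th:main}, running $\As(f, \mathbb{L}_{\var{u}}, g)$ on $\ket{0}^{\otimes n}\ket{-}$ produces a state whose $\ket{0^n}$-amplitude (ignoring the phase-kickback qubit, which stays in $\ket{-}$) equals $\Phi_{f, \mathbb{L}_{\var{u}}, g} = \frac{C_{f,g}(\var{u})}{2^n}$. Writing the full output of that sub-circuit as $\frac{C_{f,g}(\var{u})}{2^n}\ket{0^n} + \beta_{\var{u}}\ket{W_{\var{u}}}$, where $\ket{W_{\var{u}}}$ is the (normalized) projection onto the orthogonal complement of $\ket{0^n}$ and hence has zero amplitude on $\ket{0^n}$ by construction, and $\beta_{\var{u}}$ is the corresponding norm. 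Recombining the $R$-components linearly gives the claimed pre-measurement state $\sum_{\var{u}} \alpha_{\var{u}} \ket{\var{u}}\left(\frac{C_{f,g}(\var{u})}{2^n}\ket{0^n} + \beta_{\var{u}}\ket{W_{\var{u}}}\right)$ (suppressing the trailing $\ket{-}$ on $q_{n+1}$).

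The step I expect to require the most care is the rigorous justification that the $R$-register ``decouples'' — i.e.\ that conditioning on $R = \ket{\var{u}}$ genuinely reduces the dynamics to $\As(f,\mathbb{L}_{\var{u}},g)$ on $Q$. This hinges entirely on Proposition~\ref{prop:DC}: $DC$ must be block-diagonal in the $R$-basis with the $\var{u}$-block equal to the $\mathbb{L}_{\var{u}}$-oracle on $Q$, and all other gates ($C_n^1$ aside, which only prepares $R$; and steps 2--5, 7--9, which touch only $Q$ and $q_{n+1}$) must commute trivially with projectors onto $R$-basis states. Once this block structure is spelled out, the rest is bookkeeping. I would also note that the specific form of $C_n$ (here $C_n^1(\var{u})$ giving $\alpha_{\var{x}} = \llbracket \var{x} = \var{u}\rrbracket$) is irrelevant to the lemma, which is stated for an arbitrary unitary $C_n$; this generality is exactly what lets the later sections plug in superposition-of-linear-functions or Dicke-state preparations for $C_n$.
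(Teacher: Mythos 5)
Your proposal is correct and follows exactly the route the paper intends: the paper gives no argument beyond ``follows directly from Algorithm~\ref{algo},'' and your write-up supplies the intended details --- the $R$-diagonal block structure of $DC$ from Proposition~\ref{prop:DC}, the reduction of each $R=\ket{\var{u}}$ block to $\As(f,\mathbb{L}_{\var{u}},g)$, and the $\ket{0^n}$-amplitude $\Phi_{f,\mathbb{L}_{\var{u}},g}=C_{f,g}(\var{u})/2^n$ from Theorem~\ref{th:main}. Nothing to correct.
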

The proof of this lemma follows directly form Algorithm \ref{algo}.
Let us now present the following theorem for cross-correlation sampling using Lemma \ref{th:tweak}. 

\begin{theorem}
\label{th:cor-samp}
If we fix $C_n=H^{\otimes n}$ then upon measuring the predetermined $2n$ many qubits at the end of
$\mathbb{A}(C_n)$, the probability of getting the state $\var{u}||0^n$ is 
$\frac{C_{f,g}(\var{u})^2}{2^{3n}}$ for all $\var{u} \in \{0,1\}^n$. 
\end{theorem}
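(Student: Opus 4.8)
The plan is to specialize Lemma~\ref{th:tweak} to the case $C_n = H^{\otimes n}$ and read off the claimed probability. First I would recall that $H^{\otimes n}\ket{0}^{\otimes n} = \frac{1}{2^{n/2}}\sum_{\var{u}\in\{0,1\}^n}\ket{\var{u}}$, so in the notation of Lemma~\ref{th:tweak} we have $\alpha_{\var{u}} = \frac{1}{2^{n/2}}$ for every $\var{u}$. Substituting this into the pre-measurement state given by that lemma yields
\begin{equation*}
\frac{1}{2^{n/2}}\sum_{\var{u}\in\{0,1\}^n}\ket{\var{u}}\left(\frac{C_{f,g}(\var{u})}{2^n}\ket{0^n} + \beta_{\var{u}}\ket{W_{\var{u}}}\right).
\end{equation*}
Since each $\ket{W_{\var{u}}}$ has zero amplitude on $\ket{0^n}$, the amplitude of the basis state $\ket{\var{u}}\ket{0^n}$ in this superposition is exactly $\frac{1}{2^{n/2}}\cdot\frac{C_{f,g}(\var{u})}{2^n} = \frac{C_{f,g}(\var{u})}{2^{3n/2}}$.

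Next I would take the modulus squared of this amplitude to obtain the measurement probability: the probability of observing $\var{u}||0^n$ upon measuring the $2n$ predetermined qubits (the register $R$ together with $q_1,\dots,q_n$) in the computational basis is $\left(\frac{C_{f,g}(\var{u})}{2^{3n/2}}\right)^2 = \frac{C_{f,g}(\var{u})^2}{2^{3n}}$, as claimed. It is worth noting as a sanity check that these probabilities sum to at most $1$ over all $\var{u}$, consistent with the bound $\sum_{\var{u}} C_{f,g}(\var{u})^2 \le 2^{3n}$ quoted earlier; the remaining probability mass sits on the states $\ket{\var{u}}\ket{W_{\var{u}}}$ whose second register is not all-zero.

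The only real content is Lemma~\ref{th:tweak}, which is already assumed (its proof is said to follow directly from Algorithm~\ref{algo}); everything after that is a one-line substitution and a squaring. So there is no genuine obstacle here — the step requiring the most care is simply making sure the normalization factor from $C_n = H^{\otimes n}$ is tracked correctly, i.e. that the $2^{-n/2}$ from the Hadamard layer combines with the $2^{-n}$ inside the parenthesized factor to give the overall $2^{-3n/2}$ amplitude, whose square is the stated $2^{-3n}$. One could also remark that this recovers Theorem~\ref{th:main} pointwise: conditioning the outcome on $R$ being in state $\var{u}$ renormalizes the probability of $0^n$ on the query register to $\frac{C_{f,g}(\var{u})^2}{2^{2n}}$, matching algorithm $A_1$.
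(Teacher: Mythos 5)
Your proposal is correct and follows essentially the same route as the paper: substitute $\alpha_{\var{u}} = 2^{-n/2}$ from $C_n = H^{\otimes n}$ into the pre-measurement state of Lemma~\ref{th:tweak}, observe that $\ket{W_{\var{u}}}$ carries no amplitude on $\ket{0^n}$, and square the resulting amplitude $\frac{C_{f,g}(\var{u})}{2^{3n/2}}$. The added normalization sanity check and the consistency remark with Theorem~\ref{th:main} are fine but not needed.
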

\begin{proof}
If we fix $C_n=H^{\otimes n}$, then from Lemma~\ref{th:tweak}, the pre-measurement state becomes 
$$
\displaystyle \sum_{\var{u} \in \{0,1\}^n}
\frac{1}{2^{\frac{n}{2}}}\ket{\var{u}} 
\left( \frac{C_{f,g}(\var{u})}{2^n}\ket{0^n} + \beta_{\var{u}}\ket{W_{\var{u}}}
\right).
$$
Thus the probability of getting the output $\var{u}||0^n$ upon measuring the register 
$R$ and qubits $q_1$ to $q_n$ in computational basis is given by
$\frac{C_{f,g}(\var{u})^2}{2^{3n}}$.
\end{proof}
Here note that $\displaystyle{2^{2n} \leq \sum_{\var{u}\in\{0,1\}^n}{C_{f,g}(\var{u})^2} \leq 2^{3n}}$ which implies
$\displaystyle{\frac{1}{2^{3n}}\sum_{\var{u}\in\{0,1\}^n}{C_{f,g}(\var{u})^2} \leq 1}$ and this bound is tight for certain choices of $f$ and $g$. 
Therefore, it is not possible to have a generalized algorithm that always samples 
from the cross-correlation spectrum with better than $\frac{C_{f,g}(\var{u})^2}{2^{3n}}$ probability.

\subsection{Checking if $f$ and $g$ are Uncorrelated of degree $m$}
In this problem we want to check if $C_{f,g}(\xv)=0$ for all $\xv \in \{0,1\}^n$ 
with weight of $\xv$ less than or equal to $m$. This problem is similar to checking
if a function $f$ is $m$-resilient. Thus the same sampling and amplitude amplification
algorithm of~\cite{chak} can be implemented. If there is a quantum algorithm such that the probability of obtaining a good (undesirable) state 
while calling an algorithm is $a^2$, 
then such a state can be obtained with constant probability 
by running the algorithm $\mathcal{O}\left(\frac{1}{a}\right)$ times without the measurement. 
Here if we run the cross-correlation sampling algorithm and if the output is 
of the form $\yv||0^n$ where $wt(\yv) \leq m$ then it concludes $f$ and $g$ are 
not uncorrelated of degree $m$. Then we have the following result.

\begin{proposition}
\label{prop:uncor}
There is an algorithm that verifies if two functions $f$ and $g$ are uncorrelated 
of degree $m$ with constant probability by making $\mathcal{O}(\frac{1}{a})$ queries to $f$ and $g$ each, 
where 
$$a^2= \frac{1}{2^{3n}} \sum_{\xv: wt(\xv) \leq m} C_{f,g}(\xv)^2.$$
\end{proposition}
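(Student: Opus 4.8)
The plan is to run amplitude amplification on top of the cross-correlation sampling algorithm $\mathbb{A}(H^{\otimes n})$ of Theorem~\ref{th:cor-samp}, mirroring the non-resiliency checking strategy of~\cite{chak}. By Lemma~\ref{th:tweak} (with $C_n=H^{\otimes n}$), starting from $\ket{0}^{\otimes n}\otimes\ket{0}^{\otimes n+1}$ the algorithm reaches the pre-measurement state $\sum_{\var{u}\in\{0,1\}^n}\frac{1}{2^{n/2}}\ket{\var{u}}\big(\frac{C_{f,g}(\var{u})}{2^n}\ket{0^n}+\beta_{\var{u}}\ket{W_{\var{u}}}\big)$, where $\ket{W_{\var{u}}}$ carries no $\ket{0^n}$ component. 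I would declare the \emph{good subspace} to be the span of $\{\ket{\var{u}}\ket{0^n}: wt(\var{u})\le m\}$ on register $R$ together with the qubits $q_1,\dots,q_n$; by Theorem~\ref{th:cor-samp} the squared overlap of the pre-measurement state with this subspace is exactly $a^2=\frac{1}{2^{3n}}\sum_{\xv: wt(\xv)\le m}C_{f,g}(\xv)^2$. A computational-basis measurement landing in this subspace returns some $\var{u}||0^n$ with $wt(\var{u})\le m$, which by the definition of $C_{f,g}$ deterministically certifies that $f$ and $g$ are \emph{not} uncorrelated of degree $m$; and if they \emph{are} uncorrelated of degree $m$ then $a=0$ and no such outcome can ever occur.

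Next I would assemble the amplitude amplification iterate $-\,\mathbb{A}(H^{\otimes n})\,S_0\,\mathbb{A}(H^{\otimes n})^{-1}\,S_\chi$. The reflection $S_\chi$ about the good subspace is oracle-free: it only needs to test that $q_1,\dots,q_n$ are all in $\ket{0}$ and that the Hamming weight of the contents of $R$ is at most $m$, both of which are standard reversible computations using $\mathcal{O}(\mathrm{poly}(n))$ gates and a few ancillae. The reflection $S_0$ about $\ket{0}$ is likewise oracle-free, and $\mathbb{A}(H^{\otimes n})$ makes exactly one query to $f$ and one to $g$ (hence so does its inverse), so each iterate costs $\mathcal{O}(1)$ queries to $f$ and to $g$. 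Applying the standard analysis behind Lemma~\ref{bera:est} and~\cite{brass}, after $\Theta(1/a)$ iterates the amplitude inside the good subspace is boosted to a constant, so a single final measurement produces an undesirable outcome with constant probability whenever $a>0$; repeating a constant number of times drives the one-sided error below any fixed threshold, while keeping the total query count at $\mathcal{O}(1/a)$ to $f$ and $g$ each, as claimed.

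The main obstacle is the usual one for such one-sided testers: the right number of iterates depends on the unknown value $a$, and overshooting can \emph{lower} the success probability. This is handled exactly as in~\cite{chak} — first sample $\mathbb{A}(H^{\otimes n})$ a constant number of times to catch the case where $a$ is already large, and otherwise run the amplitude amplification with a geometrically growing guess for $1/a$; the standard exponential-search schedule still terminates within $\mathcal{O}(1/a)$ queries in total. As elsewhere in this paper, the detailed bookkeeping of this search is routine, and the genuinely new ingredient is the sampling primitive $\mathbb{A}(H^{\otimes n})$ itself.
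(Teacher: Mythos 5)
Your proposal is correct and follows essentially the same route as the paper: run the cross-correlation sampling algorithm $\mathbb{A}(H^{\otimes n})$ of Theorem~\ref{th:cor-samp}, take the good states to be $\var{u}||0^n$ with $wt(\var{u})\le m$ (total probability $a^2$), and apply amplitude amplification to get an undesirable outcome with constant probability in $\mathcal{O}(1/a)$ queries. The paper's proof is terser — it simply invokes ``the amplitude amplification argument'' — whereas you additionally spell out the oracle-free reflections, the per-iterate query cost, and the exponential-search handling of the unknown $a$, all of which are consistent with the paper's intent.
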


\begin{proof}
We use the sampling algorithm defined in Theorem~\ref{th:cor-samp}.
Then let an undesirable state be one that guarantees existence of $\var{u}$ 
such that $wt(\var{u}) \leq m$ and $C_{f,g}(\var{u}) \neq 0$. This is same 
as obtaining a state $\var{u}||0^n$ output after measurement of the algorithm 
in Theorem~\ref{th:cor-samp}. The probability of getting such a state is 
$\frac{1}{2^{3n}} \sum_{\xv: wt(\xv) \leq m} C_{f,g}(\xv)^2$. 
Applying the amplitude amplification argument to this yields the result.
\end{proof}

Here note that because of the structure of $\mathbb{A}(C_n)$ we can control the 
points over which cross-correlation is sampled. If $C_n$ is such that the amplitude
of some state $\ket{\yv}$ in the register $R$ is $0$ then $C_{f,g}(\yv)$ never 
affects the output of $\mathbb{A}(C_n)$. Now we can use the fact that the values
of $C_{f,g}(\xv)$ where $wt(\xv)>m$ is irrelevant while trying to check if $f$ and $g$
are uncorrelated of degree $m$. Thus we need a $C_n$ which makes the algorithm efficient 
by not having states whose outputs are depending on $C_{f,g}(\xv),~wt(\xv)>m$. 
To this end we use the following result for Dicke state preparation circuits.
Dicke states $\ket{D^n_i}$ are the equal superposition state of weight $i$ on 
some $n$ qubit system.

\begin{theorem}[\cite{dicke}]
Starting from the state $\ket{0}^n$ any Dicke state $\ket{D^n_m}$ can be 
deterministically prepared using $\mathcal{O}\left(n^2\right)$ CNOT gates and $\mathcal{O}\left(n^2\right)$ many single qubit gates. 
\end{theorem}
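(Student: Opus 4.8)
The plan is to follow the deterministic Dicke-state preparation circuit, organised around the elementary recursive identity
$$\ket{D^n_m} = \sqrt{\tfrac{n-m}{n}}\;\ket{D^{n-1}_m}\ket{0} \;+\; \sqrt{\tfrac{m}{n}}\;\ket{D^{n-1}_{m-1}}\ket{1}.$$
First I would reduce the statement to constructing, for every $k\le n$, a unitary $U_{n,k}$ with the stronger property $U_{n,k}\ket{1^l 0^{n-l}}=\ket{D^n_l}$ for all $0\le l\le k$. The basis state $\ket{1^m 0^{n-m}}$ is prepared from $\ket{0}^n$ with at most $n$ single-qubit $X$ gates, so the whole cost is $\mathcal{O}(n)$ plus the cost of $U_{n,m}$.

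Next I would introduce the ``Split \& Cyclic Shift'' block $\mathrm{SCS}_{\ell,k}$ acting on the first $\ell$ qubits, realised as a ladder of one two-qubit gate followed by $k-1$ three-qubit gates, each being a (doubly-)controlled single-qubit $Y$-rotation with angle $2\arccos\sqrt{l/\ell}$ for the appropriate index $l$. Each such gate compiles into $\mathcal{O}(1)$ CNOTs and $\mathcal{O}(1)$ single-qubit gates via the standard controlled-rotation gadget, so $\mathrm{SCS}_{\ell,k}$ costs $\mathcal{O}(k)=\mathcal{O}(n)$ gates of each kind. The core lemma, which I would establish by a direct computation on computational-basis states, is that for $1\le l\le k$,
$$\mathrm{SCS}_{\ell,k}\ket{1^l 0^{\ell-l}} = \sqrt{\tfrac{\ell-l}{\ell}}\,\ket{1^l 0^{\ell-1-l}}\ket{0} + \sqrt{\tfrac{l}{\ell}}\,\ket{1^{l-1}0^{\ell-l}}\ket{1},$$
while $\ket{0^\ell}$ is left fixed.

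Then I would define $U_{n,k}=U_{n-1,\min(k,n-1)}\cdot \mathrm{SCS}_{n,k}$ recursively, with $U_{n,0}=I$ and each block $\mathrm{SCS}_{\ell,\cdot}$ acting on qubits $1,\dots,\ell$, and prove $U_{n,k}\ket{1^l 0^{n-l}}=\ket{D^n_l}$ by induction on $n$: applying $\mathrm{SCS}_{n,k}$ to $\ket{1^l 0^{n-l}}$ produces the two branches $\ket{1^l 0^{n-1-l}}\ket{0}$ and $\ket{1^{l-1}0^{n-l}}\ket{1}$ with amplitudes $\sqrt{(n-l)/n}$ and $\sqrt{l/n}$, and the induction hypothesis applied to $U_{n-1,\cdot}$ on the first $n-1$ qubits turns these into $\ket{D^{n-1}_l}\ket{0}$ and $\ket{D^{n-1}_{l-1}}\ket{1}$, which recombine to $\ket{D^n_l}$ by the identity above. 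For the gate count, $U_{n,m}$ unfolds into $\mathcal{O}(n)$ SCS blocks, each of cost $\mathcal{O}(m)\le\mathcal{O}(n)$, for a total of $\mathcal{O}(nm)=\mathcal{O}(n^2)$ CNOTs and $\mathcal{O}(n^2)$ single-qubit gates (the sharper $\mathcal{O}(nm)$ bound being what makes the construction cheap for small $m$).

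I expect the main obstacle to be the verification of the SCS lemma: one must check that the two-qubit gate together with the chain of controlled three-qubit rotations redistributes amplitude among the basis states of each weight exactly as claimed, with the binomial normalisations $\sqrt{l/\ell}$ and $\sqrt{(\ell-l)/\ell}$ arranged so that they telescope through the product of blocks onto the uniform weight-$m$ superposition. That amplitude bookkeeping, rather than the asymptotic gate count, is the delicate part.
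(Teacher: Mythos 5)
This theorem is imported from~\cite{dicke} without proof, so there is no internal argument to compare against; your sketch reproduces the standard deterministic split-and-cyclic-shift construction (B\"artschi--Eidenbenz, on which the cited work builds), which is essentially the same approach as the source the paper relies on. The recursion $\ket{D^n_m}=\sqrt{(n-m)/n}\,\ket{D^{n-1}_m}\ket{0}+\sqrt{m/n}\,\ket{D^{n-1}_{m-1}}\ket{1}$, the SCS lemma, and the resulting $\mathcal{O}(nm)\le\mathcal{O}(n^2)$ gate count are all correct as stated.
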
 

We denote the corresponding unitary that prepares the Dicke state $\ket{D^n_m}$ as $UD^n_m$, 
so that we have 
$$UD^n_m \ket{0}^{\otimes n}= \frac{1}{\sqrt{n \choose m}} \sum\limits_{\xv: wt(\xv)=m} \ket{\xv}.$$
Using this result we design a sampling algorithm in the following manner.

\begin{theorem}
\label{th:dicke}
For any $m<n$, there is an algorithm that verifies if two functions $f$ and $g$ are uncorrelated 
of degree $m$ with constant probability by making $\mathcal{O}(\sum_{i=0}^m\frac{1}{a_i})$ queries to $f$ and $g$ each, 
where 
$$(a_i)^2= \frac{1}{{n \choose i}2^{2n}} \sum_{\xv: wt(\xv) = i} C_{f,g}(\xv)^2$$.
\end{theorem}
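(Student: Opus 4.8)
The plan is to run the uncorrelatedness check weight-by-weight, using a separate Dicke-state-seeded instance of $\mathbb{A}(C_n)$ for each weight $i \in \{0,1,\dots,m\}$, and to amplify each instance individually before moving on. Concretely, for a fixed $i$ I would set $C_n = UD^n_i$, so that by Lemma~\ref{th:tweak} the pre-measurement state of $\mathbb{A}(UD^n_i)$ is
$$
\frac{1}{\sqrt{\binom{n}{i}}}\sum_{\xv:\,wt(\xv)=i}\ket{\xv}\left(\frac{C_{f,g}(\xv)}{2^n}\ket{0^n}+\beta_{\xv}\ket{W_{\xv}}\right),
$$
so the probability of observing an output of the form $\yv\|0^n$ with $wt(\yv)=i$ upon measuring the $2n$ designated qubits is exactly $(a_i)^2 = \frac{1}{\binom{n}{i}2^{2n}}\sum_{\xv:\,wt(\xv)=i}C_{f,g}(\xv)^2$. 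Observing any such $\yv\|0^n$ (for any $i$) deterministically certifies that $f,g$ are \emph{not} uncorrelated of degree $m$; conversely, if for every $i\le m$ we are confident that $C_{f,g}(\xv)=0$ for all $\xv$ of weight $i$, then by Definition~\ref{def:uncor} the functions are uncorrelated of degree $m$.

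The key steps, in order: (i) for each $i=0,1,\dots,m$, construct the circuit $\mathbb{A}(UD^n_i)$, which by the Dicke-state theorem of~\cite{dicke} costs only $\mathcal{O}(n^2)$ extra gates to prepare $C_n$ and still makes exactly one query each to $f$ and $g$; (ii) treat ``output $\yv\|0^n$ with $wt(\yv)=i$'' as the undesirable subspace, which has probability $(a_i)^2$, and apply the amplitude-amplification argument already invoked in Proposition~\ref{prop:uncor} (and in~\cite{chak,brass}): running the amplification iterate $\mathcal{O}(1/a_i)$ times boosts the success probability to a constant, so $\mathcal{O}(1/a_i)$ queries to $f$ and to $g$ suffice to detect a nonzero $C_{f,g}$ on weight-$i$ points if one exists; (iii) sum the per-weight costs to get the claimed $\mathcal{O}\!\left(\sum_{i=0}^m 1/a_i\right)$ total query complexity, and take a union bound (or boost each sub-test's confidence by a $\log m$ factor, absorbed into the constant) so that the overall procedure is correct with constant probability. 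If none of the $m+1$ amplified sub-tests ever returns an undesirable state, we declare $f$ and $g$ uncorrelated of degree $m$.

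The main obstacle, and the point that needs care, is the decoupling between weights: because $\mathbb{A}(C_n)$ only ever produces outputs supported on the weights appearing in $C_n\ket{0}^{\otimes n}$, seeding with $UD^n_i$ guarantees that the amplified amplitude for weight $i$ is not ``diluted'' by the (irrelevant) cross-correlation mass on weights $>m$ — this is exactly why we gain over the single-$C_n=H^{\otimes n}$ approach of Proposition~\ref{prop:uncor}, where the success amplitude $a$ can be far smaller than any individual $a_i$ because the $2^{3n}$ normalization spreads the weight over all $2^n$ points. One should also note that some $a_i$ could be $0$ (if $C_{f,g}$ vanishes identically on weight-$i$ points); in that case the $i$-th sub-test simply never fires, which is the correct behavior, and the term $1/a_i$ in the bound should be read as contributing nothing beyond a constant number of confirmatory trials for that weight. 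I would also remark that, as in the resiliency discussion, the estimates $a_i$ are not known a priori, so in practice one uses the standard ``exponential search'' over the number of amplification rounds; this only affects the hidden constants and not the stated asymptotics.
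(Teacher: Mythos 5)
Your proposal is correct and follows essentially the same route as the paper's own proof: seed $\mathbb{A}(C_n)$ with $C_n = UD^n_i$ for each weight $i \le m$, amplify each instance separately with $\mathcal{O}(1/a_i)$ iterations, and sum the costs. Your additional remarks on the $a_i=0$ case and the exponential search over amplification rounds are sensible caveats that the paper leaves implicit, but they do not change the argument.
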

\begin{proof}
We proceed in the same way as Proposition~\ref{prop:uncor} but with a different choice of $C_n$. 
Consider the circuit $\mathbb{A}(UD^n_i)$ for some $0 \leq i \leq m$. 
This algorithm by definition of Algorithm~\ref{algo} outputs a state 
$\yv||0^n$ with probability $\frac{C_{f,g}(\yv)^2}{{n \choose i}2^{2n}}$
for all values of $\yv$ with weight $i$, and probability of obtaining the state 
$\xv||0^n$ is zero if $wt(\xv) \neq i$. Thus any output $\yv||0^n$ is an undesirable
outcome corresponding to a point of weight $i$ in the cross-correlation spectrum 
with nonzero value. We can obtain such an outcome by making 
$\mathcal{O}(\frac{1}{a_i})$ queries to $f$ and $g$ each, 
where $(a_i)^2= \frac{1}{{n \choose i}2^{2n}} \sum_{\xv: wt(\xv) = i} C_{f,g}(\xv)^2$.

Thus, to check if there is any such point for all $1 \leq i \leq m$ 
we need to run the circuits $\mathbb{A}(UD^n_i),0 \leq i \leq m$ and apply
amplitude amplification on them individually, and thus the total query complexity
is $\mathcal{O}(\sum_{i=0}^m\frac{1}{a_i})$.
\end{proof}

This is an improvement on the cross-correlation sampling based result for small values of $m$.
If $m$ is a constant then ${n \choose m} < n^m$. Let $M_i$ denote the sum 
$\sum_{\xv: wt(\xv)=i} C_{f,g}(\xv)^2$. 
Then the query complexity for checking if $f$ and $g$ are uncorrelated of degree $m$ using 
the algorithm in Theorem~\ref{th:dicke} would be
$2^{n}\sum_{i=0}^m\sqrt{\frac{{n \choose i}}{M_i}} \leq 2^{n} n^{\frac{m+1}{2}} \sum_{i=0}^m \frac{1}{\sqrt{M_i}}$. 
On the other hand if we would have used Proposition~\ref{prop:uncor} the query complexity would 
have been $2^{\frac{3n}{2}} \sum_{i=0}^m \frac{1}{\sqrt{M_i}}$. For all cases 
where $\sum_{i=0}^m \frac{1}{\sqrt{M_i}}$ is of the form $\frac{1}{poly(n)}$, Theorem~\ref{th:dicke}
provides polynomial improvement over using the cross-correlation sampling algorithm. 

At this point one may wonder if a similar Dicke state oriented approach could be applied for resiliency checking. While such a possibility can not be ruled out, in our proposed algorithmic framework, and also that of~\cite{chak} it does not help. 
In $m$-resiliency checking we want to check if the value
$P_1=\sum_{wt(\var{u}) \leq m} {W_f(\var{u})^2}$ is greater than $0$. 
In this direction~\cite{chak} designs a quantum algorithm so that the probability of getting a predetermined state is $\frac{P_1}{D_1}$ where $D_1=2^{2n}$ and then apply amplitude amplification on it, which gives advantage over known classical algorithms. 
Here following points are important. 
\begin{enumerate}

\item In the algorithm due to~\cite{chak}, $D_1=2^{2n}$. Thus the denominator in the 
probability expression solely depends on $n$.

\item $P_1 \leq 2^{2n}$ and this bound is tight due to Parseval's identity. Thus it is not possible to have $D_1 < 2^{2n}$ if it depends solely on $n$. 

\end{enumerate}
This is important as if the probability of getting a state is $\frac{P}{D}$ where $P$ and $D$ are both integers, then the amplitude amplification algorithm~(\cite{brass}) has the query complexity
of $\mathcal{O}(\frac{\sqrt{D}}{\sqrt{P}})$. Thus with $P_1$ fixed for resiliency checking, if one could decrease the initial value of $D_1$ using a different sampling algorithm, it would have resulted in better query and thus time complexity for resiliency checking. However as we have discussed that is not possible if $D_1$ depends only on $n$.

Similarly for checking if two functions $f, g$ are uncorrelated of degree $m$
we need to determine if $P_2=\sum_{wt(\var{u}) \leq m} {C_{f,g}(\var{u})^2}$ is greater than $0$. If we use the algorithm due to Proposition~\ref{prop:uncor} then we get a predetermined state with probability $\frac{P_2}{D_2}$ where $D_2=2^{3n}$. Now we also know that $P_2$ can be high as $2^{3n}$ and thus $D_2$ cannot be smaller than $2^{3n}$ if it solely depends on $n$. At this point we use Dicke states intelligently to make $D_2$ depend on both $n$ and $m$ and effectively run $m+1$ different algorithms $A_i, 0 \leq i \leq m$ where the probability of getting a predetermined state in $A_i$ is $\frac{P_{2,i}}{{n \choose i}2^{2n}}$ where
$P_{2,i}=\sum_{wt(\var{u})=i} C_{f,g}(\var{u})^2$. This is what allows us to have advantage in Theorem~\ref{th:dicke} for small values of $n$. If $m= \Omega (n)$ then we essentially have same time complexity due to both Proposition~\ref{prop:uncor} and Theorem~\ref{th:dicke}. A similar approach could not be applied for resiliency because the Forrelation set-up for resiliency checking or the set-up in~\cite{chak} does not give us avenue to make $D_1$ which depends on both $n$ and $m$.
The central question thus is whether such an improvement can be obtained for resiliency as well, which we highlight in the conclusion. 

\section{Conclusion}
\label{sec:6}
Forrelation is one of the central problems in the quantum paradigm.
This has been studied to demonstrate separation between the bounded error quantum model 
and the randomized classical model, starting with the seminal paper of Aaronson et al~\cite{aaron}.
The main contribution of~\cite{aaron} concentrated on designing a class of formulations 
($k$-fold Forrelation, that can be defined for any $k$ functions) 
and provide negative results related to limitation of the classical computation
for evaluating such formulations. 

In this paper we have studied the Forrelation problem to build a unified framework
towards analyzing different Boolean function spectra. First we studied a desirable 
instantiation of $2$-fold Forrelation set-up to construct promise problems 
based on bent duality. Then we move to the $3$-fold Forrelation set-up 
and use the existing techniques as well as certain modifications 
to analyze the Walsh, cross-correlation and the autocorrelation spectra of a Boolean function. 
We first show how Walsh spectrum value estimation can be improved by a constant factor in this
framework and use that to revise the resiliency checking algorithm of~\cite{chak}. 
Then we move to cross-correlation estimation and sampling algorithms using $3$-fold Forrelation.
This also immediately provides results related to autocorrelation (when both the functions used are identical).
We present several new results in this domain and also show that the results improve 
the autocorrelation estimation algorithm due to~\cite{bera} for similar levels of accuracy. Relating 
3-fold Forrelation to cross-correlation provides a new insight in this domain of research, and to the best
of our knowledge this connection could not be identified earlier.
We also exploit the concept of Dicke states to provide a more efficient algorithm when used 
for checking if two functions $f$ and $g$ are uncorrelated of degree $m$, giving us 
polynomial advantage in certain cases. 

We conclude with three important research questions in this regard.
\begin{enumerate}
\item 
Can there be a more efficient resiliency checking algorithm than the one designed in
Section~\ref{sub:sampling}?
\item
Can there be a more efficient cross-correlation sampling algorithm than the one designed in Theorem~\ref{th:cor-samp}?
\item
Can higher order Forrelation problems ($k$-fold Forrelation where $k$ is greater than $3$) 
be used to obtain more efficient estimation and sampling algorithms for Boolean function spectra?
\end{enumerate}


\begin{thebibliography}{0}

\bibitem{forr}
S. Aaronson, 
``BQP and the polynomial hierarchy", 
In Proceedings of the forty-second ACM symposium on Theory of computing (STOC '10). Association for Computing Machinery, 
New York, NY, USA, pp. 141–150. DOI: https://doi.org/10.1145/1806689.1806711, 2010,
arXiv version: \url{https://arxiv.org/abs/0910.4698}, 25 Oct 2009.

\bibitem{aaron}
S. Aaronson and A. Ambainis, 
``Forrelation: A Problem that Optimally Separates Quantum from Classical Computing",
Siam J. Comput., vol. 47, no. 3, pp. 982-1038, 2018.
In Proceedings of the 47th annual ACM symposium on Theory of Computing (STOC '15). Association for Computing Machinery, New York, NY, USA, pp. 307–316. 
DOI: \url{https://doi.org/10.1145/2746539.2746547}, 2015,
arXiv version: \url{https://arxiv.org/abs/1411.5729}, 21 Nov 2014.

\bibitem{bansal}
N. Bansal, M. Sinha,
``k-Forrelation Optimally Separates Quantum and Classical Query
Complexity",
In proceedings of the 53rd Annual ACM SIGACT Symposium on Theory of Computing (STOC '21), pp. 1303–1316.
DOI: \url{https://doi.org/10.1145/3406325.3451040}, June 2021,
arXiv version: \url{https://arxiv.org/abs/2008.07003}, 17 Nov 2020.

\bibitem{bera}
D. Bera, S. Maitra and S. Tharrmashastha, 
``Efficient Quantum Algorithms Related to Autocorrelation Spectrum", 
In Progress in Cryptology – INDOCRYPT 2019, 
Lecture Notes in Computer Science, vol. 11898, pp. 415-432, Springer,
DOI: 0.1007/978-3-030-35423-7\_21, 2019,
arXiv version: \url{https://arxiv.org/abs/1808.04448}, 8 Oct 2019.

\bibitem{brass}
G. Brassard, P. Hoyer, M. Mosca and A. Tapp,
``Quantum Amplitude Amplification and Estimation", 
Quantum Computation and Quantum Information, Samuel J. Lomonaco, Jr. (editor), AMS Contemporary Mathematics, vol. 305, pp. 53-74,
DOI: \url{http://dx.doi.org/10.1090/conm/305}, 2002,
arXiv version: \url{https://arxiv.org/abs/quant-ph/0005055}, 15 May 2000.

\bibitem{car93}
C. Carlet, ``Two New Classes of Bent Functions",
Eurocrypt 1993, Lecture Notes in Computer Science, volume 765, 77--101, Springer, 1993.

\bibitem{chak}
K. Chakraborty and S. Maitra, ``Application of Grover's algorithm to check non-resiliency of a Boolean function", 
Cryptography and Communications, vol. 8, no. 3, pp. 401-413,
DOI: \url{https://doi.org/10.1007/s12095-015-0156-3}, July, 2016.

\bibitem{deutsch}
D. Deutsch  and R. Jozsa, 
``Rapid solution of problems by quantum computation",
In Proceedings of Royal Society London, vol. 439, issue 1907, pp. 553–558, DOI: \url{https://doi.org/10.1098/rspa.1992.0167}, 1992.

\bibitem{dil74}
J. F. Dillon, 
``Elementary Hadamard difference sets", Ph.D. thesis, U. of Maryland 1974. Also see
``Elementary Hadamard difference sets", pp. 237-249 in: Proc. Sixth Southeastern
Conf. Combinatorics, Graph Theory, and Computing. Winnipeg 1975.

\bibitem{grover}
L. K. Grover, 
``A fast Quantum Mechanical Algorithm for Database Search",
In Proceedings of the twenty-eighth ACM Symposium on Theory of Computing (STOC'96), pp. 212–219,
Association for Computing Machinery, New York, DOI: \url{https://doi.org/10.1145/237814.237866}, 1996,
arXiv version: \url{https://arxiv.org/abs/quant-ph/9605043}, 19 Nov 1996.

\bibitem{XZ88}
X. Guo-Zhen and J. Massey,
``A spectral characterization of correlation immune combining functions",
IEEE Transactions on Information Theory, 34(3):569--571, May 1988.

\bibitem{kerdock}
F. J. MacWilliams, N. J. A. Sloane,
``The Theory of Error Correcting Codes", North-Holland, Amsterdam, 1977.

\bibitem{dicke}
C. S. Mukherjee, S. Maitra, V. Gaurav and D. Roy,
``Preparing Dicke States on a Quantum Computer",
In IEEE Transactions on Quantum Engineering, vol. 1, pp. 1-17, 2020, Art no. 3102517, DOI: 10.1109/TQE.2020.3041479,
arXiv version: \url{https://arxiv.org/abs/2007.01681}, 19 Jul 2020.

\bibitem{nc}
M. A. Nielsen and I. L. Chuang,
``Quantum Computation and Quantum Information", 10th Anniversary Edition,
Cambridge University Press, January 2011.

\bibitem{roet}
M. Roetteler, 
``Quantum algorithms for highly non-linear Boolean functions",
In the Proceedings of the 21st Annual ACM-SIAM Symposium on Discrete Algorithms (SODA), pp. 448–457, 2010,
arXiv version: \url{https://arxiv.org/abs/0811.3208}, 24 Nov 2009.

\bibitem{cross}
P. Sarkar and S. Maitra,
``Cross-Correlation Analysis of Cryptographically Useful Boolean Functions and S-Boxes",
Theory of Computing Systems, 35(1):39--57, 2002.

\bibitem{ps04}
P. Sarkar and S. Maitra,
``Construction of nonlinear resilient Boolean functions using `small' affine functions", 
IEEE Transactions on Information Theory, 50(9), pp. 2185--2193, September 2004.

\bibitem{simon}
D. R. Simon, 
``On the Power of Quantum Computation", 
SIAM Journal on Computing, vol. 26, no. 5, pp. 1474–1483, October 1997, 
DOI: \url{https://doi.org/10.1137/S0097539796298637}.

\bibitem{shor}
P. W. Shor,
``Polynomial-Time Algorithms for Prime Factorization and Discrete Logarithms on a Quantum Computer", 
SIAM Journal on Computing, vol. 26, no. 5, October 1997, DOI: \url{https://doi.org/10.1137/S0097539795293172},
arXiv version: \url{https://arxiv.org/abs/quant-ph/9508027}, 25 Jan 1996.

\bibitem{kerdock1}
P. Stanica, B. Mandal and S. Maitra,
``The connection between quadratic bent-negabent functions and the Kerdock code", 
Appl. Algebra Eng. Commun. Comput. 30(5): 387-401 (2019),
\url{https://link.springer.com/article/10.1007/s00200-019-00380-4}.

\bibitem{tal}
A. Tal, 
``Towards Optimal Separations between Quantum and Randomized Query Complexities," 
IEEE 61st Annual Symposium on Foundations of Computer Science (FOCS), Durham, NC, USA, pp. 228-239, DOI: 10.1109/FOCS46700.2020.00030, 2020,
arXiv version: \url{https://arxiv.org/abs/1912.12561}, 29 Dec 2019.

\end{thebibliography}
\end{document}